\pgfplotsset{compat=newest}
\newtheorem{proposition}{Proposition}
\newtheorem{theorem}{Theorem}
\newtheorem{lemma}{Lemma}
\begin{document}

\title{Channel State Information Preprocessing for CSI-based Physical-Layer Authentication Using Reconciliation} 


\author{Atsu Kokuvi Angélo Passah, Rodrigo C. de Lamare, ~\IEEEmembership{Fellow,~IEEE,} and Arsenia Chorti, ~\IEEEmembership{Senior Member,~IEEE,}
\thanks{Atsu Kokuvi Angélo Passah is with ETIS Laboratory, ENSEA, CY Cergy Paris University, CNRS, France, and with the Pontifical Catholic University of Rio de Janeiro (PUC-Rio), Brazil.}%
\thanks{Arsenia Chorti is with ETIS Laboratory, ENSEA, CY Cergy Paris University, CNRS, France, and with Barkhausen Institut gGmbH, Germany.}%
\thanks{Rodrigo C. de Lamare is with the Pontifical Catholic University of Rio de Janeiro (PUC-Rio), Brazil, and with the School of Physics, Engineering and Technology, York University, United Kingdom.}
}



\maketitle

{
\begin{abstract}
This paper introduces an adaptive preprocessing technique to enhance the accuracy of channel state information-based physical layer authentication (CSI-PLA) alleviating CSI variations and inconsistencies in the time domain. To this end, we develop an adaptive robust principal component analysis (A-RPCA) preprocessing method based on robust principal component analysis (RPCA). 
The performance evaluation is then conducted using a PLA framework based on information reconciliation,  in which Gaussian approximation (GA) for Polar codes is leveraged for the design of short codelength Slepian Wolf decoders. Furthermore, an analysis of the proposed A-RPCA methods is carried out.
Simulation results show that compared to a baseline scheme without preprocessing and {without reconciliation}, the proposed A-RPCA method substantially reduces the error probability after reconciliation and also substantially increases the detection probabilities that is also $1$ in both line-of-sight (LOS) and non-line-of-sight (NLOS) scenarios. We have compared against state-of the-art preprocessing schemes in both synthetic and real datasets, including principal component analysis (PCA) and robust PCA, autoencoders and the recursive projected compressive sensing (ReProCS) framework and we have validated the superior performance of the proposed approach.
\end{abstract}
}

\begin{IEEEkeywords}
Preprocessing, physical-layer authentication, physical-layer security, information reconciliation, adaptive robust principal component analysis (A-RPCA), polar codes, Gaussian approximation.
\end{IEEEkeywords}

\section{Introduction}
\IEEEPARstart{T}{he} advent of next generation wireless communication systems is anticipated to substantially enhance wireless connectivity \cite{Aazhang}. Unprecedented capabilities, massive device density, terahertz communications and the seamless integration of artificial intelligence (AI) and machine learning (ML) at the network edge are envisioned. These advances are essential to support emerging paradigms including the deployment of autonomous agents, immersive extended reality and large scale Internet of things (IoT). 

However, in many forthcoming sixth-generation (6G) verticals, deploying standard cryptographic authentication handshakes is challenging for several reasons, including extremely stringent latency requirements, massive-scale low-end IoT deployments, and strict constraints on memory, energy, and computational resources. As a result, IoT systems and networks of cyber–physical agents are increasingly vulnerable to novel and sophisticated threats that cannot be fully mitigated by conventional cryptographic methods, particularly in power- and computation-constrained devices and in machine-to-machine communications requiring ultra-low latency \cite{Shakiba}.

Moreover, attacks occurring during network entry or link establishment are especially difficult to counter, as they take place \textit{before} any authentication protocol can be executed. Prominent examples come under the generic term ``false base station (FBS) attack'' \cite{Karaçay}, in which adversaries deploy counterfeit base stations to deceive users into setting up links with them , and, eventually, intercepting sensitive information. Consequently, complementary authentication mechanisms are required.

In this context, physical layer security (PLS), and in particular physical layer authentication (PLA), have emerged as promising complementary approaches. PLA leverages the intrinsic properties of the wireless channel or the inherent hardware imperfections of radio-frequency (RF) devices to distinguish legitimate transmitters from malicious ones. Techniques based on channel state information (CSI) or RF fingerprinting have been extensively studied as viable solutions for fast and lightweight authentication \cite{Mitev}.

In more detail, CSI-based PLA identifies a device or user by comparing observed channel state information (CSI) with pre-recorded reference CSI. It typically follows a two-phase procedure comprising an enrollment phase and an authentication phase. During the offline enrollment phase, the authenticator collects and stores CSI vectors associated with the legitimate user, which serve as a baseline reference. In the subsequent online authentication phase, newly observed CSI vectors are acquired and compared against the stored baseline to verify the transmitter’s identity. However, raw CSI measurements are inherently sensitive to distortions caused by noise, hardware impairments, and user mobility.
As a result, depending on the propagation environment and scenario—such as line-of-sight (LoS) or non-line-of-sight (NLoS) conditions—it can be challenging to maintain robust authentication performance. To mitigate these inconsistencies in CSI estimation, appropriate preprocessing techniques are therefore required.

Albeit, the following caveat must be taken into consideration. CSI-based PLA fundamentally relies on the correlation between channel measurements across authentication phases, where CSI observations are expected to be more strongly correlated under normal conditions than under adversarial ones. Even when two CSI measurements exhibit high correlation in certain feature dimensions, independently applying preprocessing to each phase can inadvertently destroy this correlation—particularly when the correlated components lie in the noisiest dimensions. Consequently, although independent preprocessing may improve denoising or filtering, it does not necessarily preserve, and may even degrade, the correlation structure that is critical for reliable authentication, ultimately reducing PLA performance.

To illustrate this effect, consider principal component analysis (PCA) preprocessing in two scenarios: (i) when the correlation between raw CSI at times 
$t$
(enrollment) and 
$t+1$ (authentication) is concentrated in the dominant principal components, and (ii) when the correlation is not aligned with the principal components. Fig.~\ref{corr12} illustrates both cases. In scenario (i) (Fig.~\ref{corr1}), the correlation remains visible before and after PCA reconstruction, increasing from 0.70 to 0.89 after preprocessing. In contrast, in scenario (ii) (Fig.~\ref{corr2}), the correlation decreases significantly after PCA reconstruction, dropping from 0.49 to 0.10, as the correlated components are discarded while retaining only the principal components.

This example demonstrates that independently applying preprocessing across authentication phases can degrade performance under normal conditions rather than improve it. Therefore, adaptive preprocessing schemes that explicitly account for channel measurements from previous authentication phases are necessary to preserve correlation and ensure reliable CSI-based PLA, which constitutes the key motivation behind the proposed approach in this manuscript.
\begin{figure}[!t]
  \centering
  \subfloat[\footnotesize Correlations concentrated in the largest principal components \label{corr1}]{
    \includegraphics[width=0.7\linewidth]{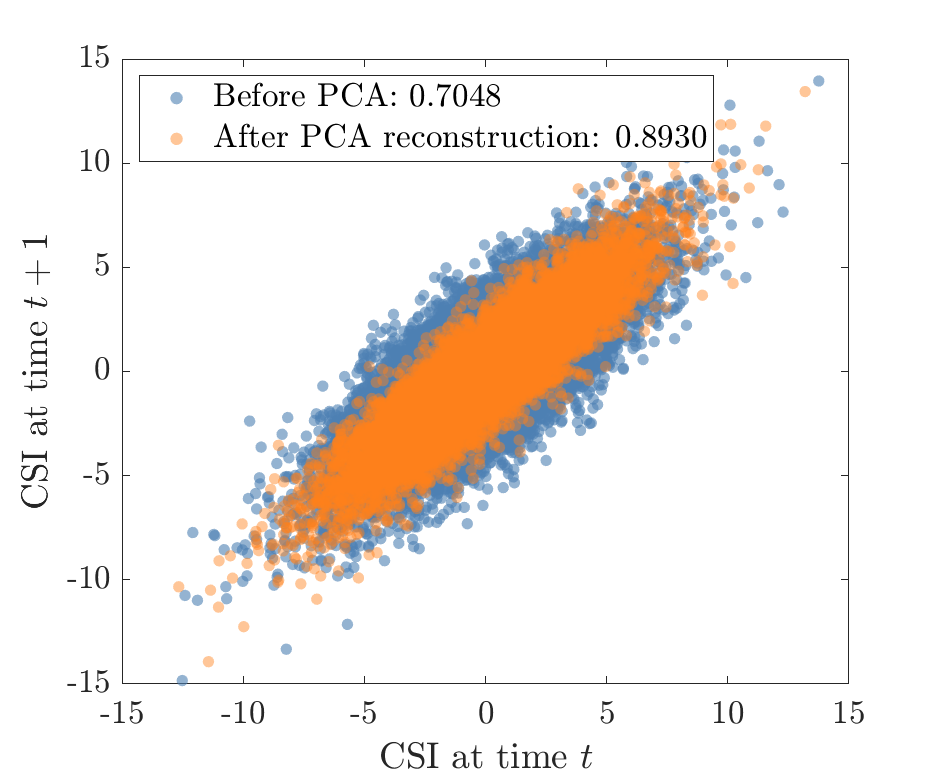}  
  } \hfill 
  \subfloat[\footnotesize Correlations not concentrated in the largest principal components \label{corr2}]{
    \includegraphics[width=0.7\linewidth]{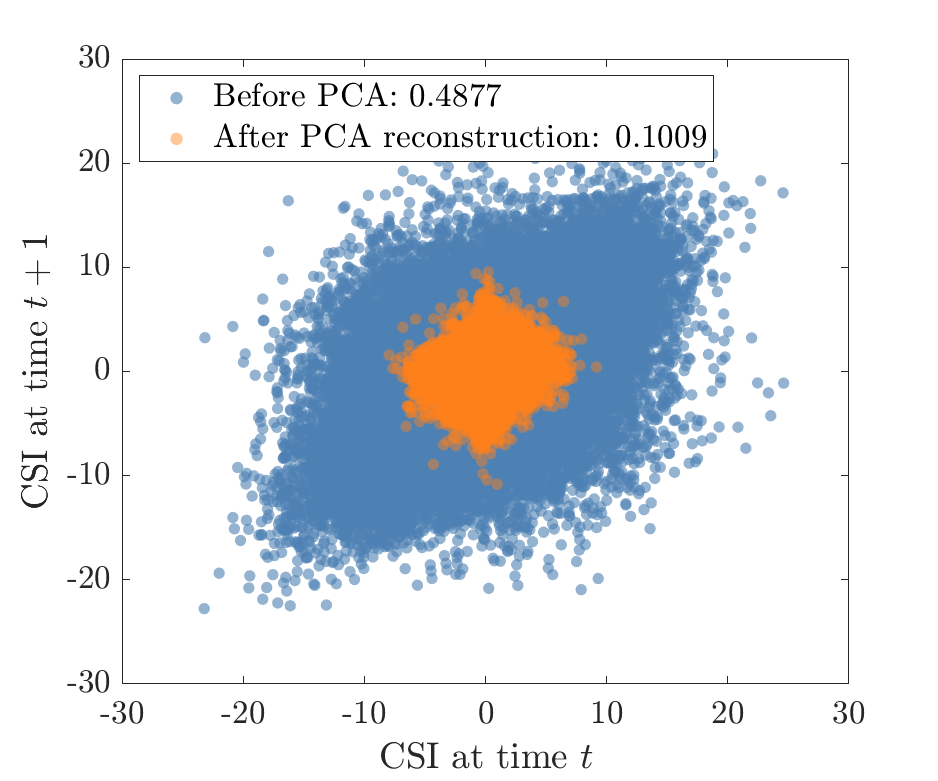}
    }
  \caption{Representation of the relationship between CSI dataset at time $t$ and $t+1$ before PCA and after PCA reconstruction: (i) when the correlation is concentrated in top principal components, the preprocessing can improve the correlation and (ii) when the correlation is not concentrated in the largest principal components, the correlation cannot be improved.}
  \label{corr12}
\end{figure}

\subsection{Previous and Related Works}

In recent years, CSI-PLA has attracted significant attention as a lightweight alternative to conventional authentication methods. In this subsection, we first review low-rank signal processing techniques that will serve to underpin the proposed approach, followed by a more general overview of recent related works on CSI-based PLA.

\subsubsection{Low-Rank Signal Processing and Robust Subspace Learning}

Principal component analysis (PCA) has been among the earliest preprocessing techniques investigated for CSI-based PLA. In \cite{Srinivasan,Srinivasan2}, PCA was introduced to reduce dimensionality and mitigate noise effects in CSI measurements, with experimental validation conducted in outdoor wireless environments.

Beyond PCA, more advanced robust subspace learning methods have been extensively studied. A comprehensive overview of such techniques was presented in \cite{Vaswani}, covering robust principal component analysis (RPCA) based on low-rank plus sparse decomposition, as well as robust subspace tracking (RST) and robust subspace recovery (RSR) methods. While RST focuses on tracking a slowly evolving low-dimensional subspace in the presence of outliers, RSR assumes that entire data vectors are either inliers or outliers, in contrast to RPCA, which models outliers as sparse corruptions superimposed on a low-rank structure.

The RPCA problem was formally introduced in \cite{Candès} through the principal component pursuit (PCP) formulation, a convex optimization framework that minimizes a weighted combination of the nuclear norm and the $\ell_1$-norm. This approach was shown to successfully recover both low-rank and sparse components even under severe corruption or missing data. Extensions of RPCA have since been proposed; for instance, \cite{Bian} incorporated feature selection by employing a $\sigma$-norm for reconstruction error and an $\ell_{2,0}$-norm constraint for subspace projection.

To address dynamic scenarios, the recursive projected compressive sensing (ReProCS) framework was proposed as an online solution for robust subspace tracking and dynamic RPCA when the underlying subspace evolves slowly over time \cite{6875221,6638807,6854385,7032141,8437747,8736886}. Theoretical guarantees on exact sparse recovery and bounded reconstruction errors were established in \cite{6638807,6875221}. Practical implementations, referred to as Prac-ReProCS, were introduced in \cite{6854385}, with extensive experimental validation reported in \cite{7032141}.

While classical ReProCS assumes slowly varying subspaces, the work in \cite{8736886} proposed an online moving window RPCA (OMWRPCA) algorithm capable of handling both gradual and abrupt subspace changes via change-point detection. Furthermore, a simplified variant known as simple-ReProCS was introduced in \cite{8437747}, relaxing several assumptions of earlier methods and achieving improved outlier tolerance and reduced memory complexity.

Low-rank signal processing techniques have also been explored in other contexts. The work in \cite{jidf} proposed an adaptive low-rank framework for interference suppression based on joint iterative least squares optimization, along with low-complexity LMS and RLS algorithms. In \cite{sor}, the subspace-orbit randomized singular value decomposition (SOR-SVD) was introduced to efficiently compute accurate low-rank matrix approximations. Additionally, \cite{8478175} formulated a nonconvex, nonsmooth sparse and low-rank decomposition problem solved via the alternating direction method of multipliers (ADMM), employing weighted nuclear norms for improved rank approximation. Building upon these developments, in this work we propose an adaptive RPCA-based preprocessing technique for PLA using synthetic data in \cite{passah2024wcnc}.

\subsubsection{Physical Layer Authentication}

Several PLA schemes have been proposed based on different channel features. In \cite{7420748}, an authentication framework exploiting the channel impulse response (CIR) was developed, incorporating multipath delay characteristics through a two-dimensional quantization procedure to mitigate random amplitude and delay fluctuations. To avoid performance degradation caused by quantization errors, the authors of \cite{9335644} proposed CIR-based PLA schemes without quantization. Alternatively, \cite{10056964} investigated a PLA scheme based on the channel phase response, mitigating performance loss due to channel correlation; closed-form expressions for the mean, variance, and probability density functions were also derived.

More recently, we introduced reconciliation-based PLA schemes that leverage information reconciliation techniques to align channel measurements across time \cite{ref_vtc2024,info_reco2}. These approaches employ quantization combined with error-correcting codes, such as polar codes \cite{ArikanPolar,arikan2009,trifonov2012,pga}, to enhance detection performance. While \cite{ref_vtc2024} focused on a single-user scenario, \cite{info_reco2} extended the framework to a multi-user setting with interfering users. In both cases, reconciliation-based PLA was shown to outperform existing schemes. 

Overall, existing works highlight the importance of robust preprocessing and subspace-aware techniques for reliable CSI-based PLA, particularly in dynamic and noisy wireless environments, thereby motivating the adaptive low-rank preprocessing approach proposed in this paper.

\subsection{Contributions}

This work presents a PLA framework based on adaptive preprocessing techniques and information reconciliation using polar codes to enhance the accuracy of CSI-based PLA and alleviate CSI estimation variations and inconsistencies in the time domain. 
In particular, an
adaptive RPCA (A-RPCA) preprocessing is proposed.
An analysis of the proposed 
A-RPCA methods is carried out along with a study of their computational cost. Unlike the works in \cite{ref_vtc2024} and \cite{info_reco2}, in this work we employ Gaussian approximation (GA) for designing polar codes instead of the code construction that relies on the binary erasure channel (BEC). The main contributions of this paper can be summarized as follows. 
\begin{itemize}
    \item A PLA framework based on an adaptive preprocessing technique denoted by A-RPCA is proposed to enhance the accuracy of CSI-PLA and alleviate CSI estimation uncertainties and time-varying nature.
    \item The proposed adaptive preprocessing takes into account both the CSI from the enrollment and the authentication phases instead of applying the preprocessing independently across the phases. This is achieved through the correlation coefficient between the CSI from the enrollment and authentication phases and it allows to discriminate the impact of the adversary CSI in contrast to the legitimate user's CSI. 
    \item A convergence analysis of the time-regularized principal component pursuit (TR-PCP) optimization problem that results in the proposed A-RPCA algorithm is carried out. 
    \item We assess the preprocessing and the PLA authentication performance with synthetic data and in real-world scenarios, where a dataset from Nokia is used to distinguish between different users. CSI-PLA using information reconciliation is applied to a real dataset unlike in \cite{info_reco2} \item  We have compared against state-of the-art preprocessing schemes in both synthetic and real datasets, including principal component analysis (PCA) and robust PCA, autoencoders and the recursive projected compressive sensing (ReProCS) framework and we have validated the superior performance of the proposed approach.
\end{itemize}

{
The remainder of the paper is organized as follows. Section \ref{sys_model} includes the system and channel models. The proposed 
A-RPCA technique along with the optimization problem used to formulate it and the algorithm convergence analysis are presented in Section 
\ref{proposed_prepro}. Then in Section \ref{auth_setup}, the PLA setup is described, including the information reconciliation and the polar code construction 
and numerical results are outlined in Section \ref{results} where the performance using synthetic data and real dataset are evaluated. Conclusions are carried out in \ref{conclude}.
}


\section{System Model} 
\label{sys_model}


In our system model, three wireless communication nodes, referred to as Bob, Alice $1$ and Alice $2$\footnote{Although it is typical to refer to a node of interest as Alice and any another node as``Eve'', in this work we use alternatively the terms Alice $1$ and Alice $2$ because i) we do not study optimal attack strategies from an adversary, and ii) we want to highlight the separability between users and attainable accuracy with CSI pre-processing.} are considered. Bob acts as the authenticator and Alice $1$ is the user who Bob wants to identify based on the CSI observed at two different time slots $t$ and $t+1$, 
in the presence of Alice $2$. 
$t$ and $t+1$ correspond respectively to the enrollment phase and the authentication phase (validation) of the authentication process. Alice $1$ and Alice $2$ are single-antenna users and Bob is an $N_b$-antenna base station. 
The enrollment and authentication phases are described in the following along with the corresponding channel estimation models. 
\subsubsection{Enrollment phase (time $t)$}
During the time slot $t$, the CSI of Alice $1$
estimated at Bob is given by 
\begin{equation}
\hat{h}_{1,i}(t) = h_{1,i}(t) + z_{1,i}(t), \, i = 1, \ldots N_b,
\end{equation}
where $z_{1,i}(t) \sim \mathcal{CN}(0,1)$ is a complex Gaussian distributed noise with zero mean and unit variance. 
We consider a compound channel model in this work in equation (\ref{eqRice}) that is a Rician fading model as the envelope or magnitude $|h_{1,i}(t)|$ is Rician distributed \cite{8809413, 9716123}, so that,
\begin{equation}
        h_{1,i}(t) = \nu + \; \sigma\, x_{1,i}(t), \, i = 1, \ldots N_b. 
    \label{eqRice}
\end{equation}
In Eq. (\ref{eqRice}), $\nu = \sqrt{\frac{K_r}{K_r+1}}$ is the LoS component and $\sigma\, x_{1,i}(t)$ is the random multipath NLoS component where $K_r$ is the Rician $K$-factor and and $\sigma = \sqrt{\frac{1}{K_r+1}}$. We assume a narrowband system in which the multipath components are unresolvable and the aggregate contribution of a large number of scattered components can be modeled as a complex, circularly symmetric Gaussian random variable $x_{1,i}(t) \sim \mathcal{CN}(0,1)$. Thus $\sigma\, x_{1,i}(t)$ is a random multipath component.
When $K_r = 0$ the model reduces to a non-line-of-sight (NLoS) Rayleigh channel and when $K_r \rightarrow \infty$ it reduces to a pure LoS channel. The Rician model holds when $0 < K_r < \infty$.

\subsubsection{Authentication phase (time $t+1$)}
During the time slot $t+1$, the CSI of Alice $1$ or Alice $2$ observed by Bob can be expressed as 
\begin{equation}
\hat{h}_{u,i}(t+1) = h_{u,i}(t+1) + z_{u,i}(t+1), \, i = 1, \ldots N_b,
\end{equation}
where $z_{u,i}(t+1) \sim \mathcal{CN}(0,1)$ and $u \in \{1, 2\}$, $1$ and $2$ denote respectively Alice $1$ and Alice $2$.
$h_{u,i}(t+1)$ is assumed to follow a Rician channel model (\ref{eqRice2}) as previously
\begin{equation}
        h_{u,i}(t+1) = \nu + \; \sigma\, x_{u,i}(t+1), \, i = 1, \ldots N_b.
    \label{eqRice2}
\end{equation}
We assume a Markov chain for time domain and spatial dependencies, captured in the correlation of $x_{1,i}(t+1)$ and $x_{1,i}(t)$  as follows \cite{info_reco2}: 
\begin{equation}
x_{1,i}(t+1) = \beta \; x_{1,i}(t) + \sqrt{1-\beta^2}\; w(t+1),
\end{equation}
where $\beta$ is the correlation coefficient and $w(t+1) \sim \mathcal{CN}(0,1)$ is a measurement noise.
After concatenating the real and imaginary parts of $\hat{h}_{1,i}(t)$ and $\hat{h}_{u,i}(t+1)$, we obtain respectively the channel matrices $\widehat{\mathbf{H}}_1(t)$ and $\widehat{\mathbf{H}}_u(t+1)$ $\in  \mathbb{R}^{N_s \times N_b}$, where $N_s$ is the number of samples.

Fig. \ref{auth} describes the PLA authentication scheme including the following steps: CSI estimation, CSI prepropcessing, and reonciliation. The reconciliation includes the quantization of the preprocessed CSI and then the Slepian-Wolf decoder using polar codes. The quantized CSI at time $t$ is encoded and then a side information is used at $t+1$ to decode the CSI at $t+1$. The side information contains the frozen bits positions and the cyclic redundancy check ($\mathrm{CRC}$) bits. The output reconciled vectors are them compared to make authentication decision. This scheme will be described in full detail in Section IV, while in the next section we will focus on the proposed A-RPCA approach.

\begin{figure} 
\centering
\includegraphics[width=0.5\textwidth]{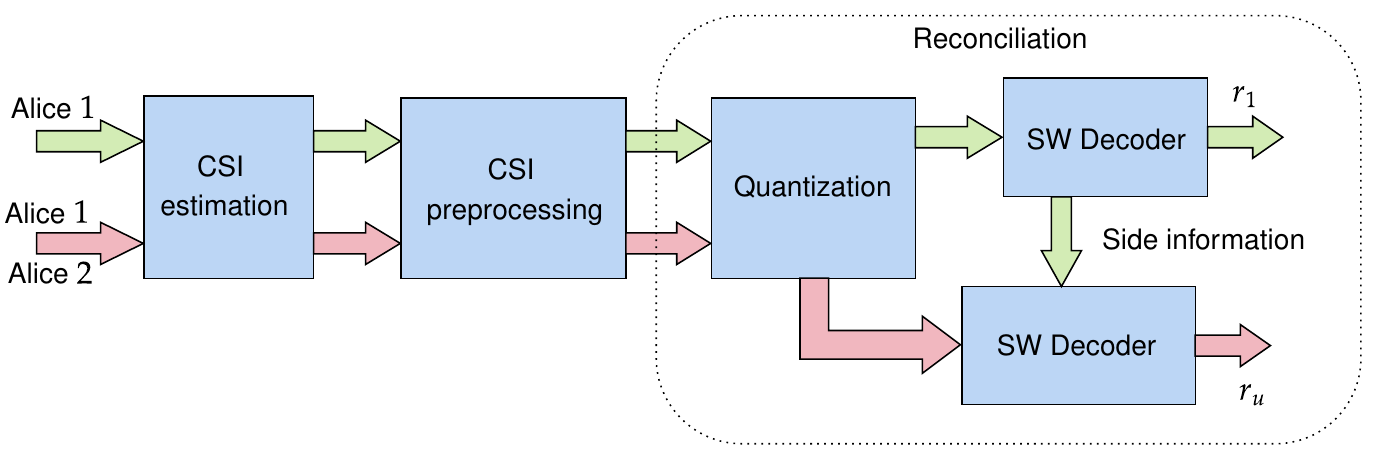} 
\vspace{-1em}
\caption{
Block diagram of the adaptive preprocessing scheme for CSI-PLA. The green arrow is the path of CSI of Alice  $1$ at time $t$ and the pink one is the path of the CSI of Alice $1$ or Alice $2$ during the next time slot $t+1$. The preprocessed channel state information are introduced in the input of the reconciliation block where a quantization is then applied followed by the Slepian-Wolf decoding and the outputs are $\mathbf{r}_1$ and $\mathbf{r}_u$, $u \in \{1, 2\}$.}   
\label{auth}
\end{figure}

\section{Proposed Adaptive RPCA Preprocessing} \label{proposed_prepro}


In this section, we present the proposed A-RPCA preprocessing scheme, including its phases, the formulation of the TR-PCP optimization problem that results in the proposed A-RPCA algorithm and its convergence analysis. First, let us provide in subsection \ref{sb_prev} some pertinent algorithmic details from the state-of-the-art approaches that will be used in comparsison and then outline the proposed A-RPCA in subsection \ref{sb_arpca}.

\subsection{Previous works} \label{sb_prev}

It should be noted that in these schemes, the preprocessing is applied independently  across authentication phases. The description is shown here at time $t$ and it is similarly applied at $t+1$.

\subsubsection{PCA preprocessing} 
The PCA preprocessing \cite{Srinivasan} applied at time $t$ to $\widehat{\mathbf{H}}_1(t)$ is described as follows.
\begin{itemize}
    \item Calculate the covariance matrix of $\widehat{\mathbf{H}}_1(t)$: $\mathbf{C}_H (t)$ 
    \item Compute the eigenvalue decomposition of $\mathbf{C}_H (t)$ and denote $\mathbf{U}_H \in \mathbb{C}^{N_b \times N_b}$ the matrix such that the rows are the eigenvectors of $\mathbf{C}_H$ sorted in decreasing order of the eigenvalues
    \item The channel data $\widehat{\mathbf{H}}_1(t)$ is then projected on the $D$ principal components axis as
     \(   \mathbf{W}_H = \mathbf{U}_{H, \, 1:D}\widehat{\mathbf{H}}_1(t) \)
    \item The PCA reconstruction that corresponds to the predictable part of the observed CSI, which is suitable for PLA, is obtained as follows:
\(    
\widetilde{\mathbf{H}}_1(t) = \mathbf{U}_{H, \,1:D}^\top \mathbf{W}_H \).
\end{itemize}

\subsubsection{RPCA preprocessing}

The RPCA method formulated via the PCP algorithm \cite{Candès} is applied to decompose the CSI of Alice $\widehat{\mathbf{H}}_1(t)$ into two components, namely, the low-rank matrix $\mathbf{L}_1(t)$ and the sparse matrix $\mathbf{S}_1(t)$. The low-rank component corresponds to the effective CSI needed for authentication, and the sparse one represents the outliers or noise that is removed from the raw CSI. In equation (\ref{rpca}), the PCP optimization problem consists of minimizing of the weighted sum of the nuclear norm and the $\ell_1$-norm. The nuclear norm $\|\mathbf{L}_1(t)\|_{*}$ enforces the low-rank structure, the $\ell_1$-norm enforces sparsity, and $\lambda$ is a regularization parameter used to determine the trade-off between the effective CSI and the outlier. 

\begin{equation}\label{rpca}
\begin{aligned}
\min_{\mathbf{L}_1(t),\,\mathbf{S}_1(t)} ~ &\|\mathbf{L}_1(t)\|_{*} + \lambda\|\mathbf{S}_1(t)\|_{1}, ~~\\
& \text{s.t.} ~~~\mathbf{L}_1(t) + \mathbf{S}_1(t) = \widehat{\mathbf{H}}_1(t),
\end{aligned}
\end{equation}

\subsubsection{Autoencoder (AE) preprocessing}
AE is an unsupervised machine learning scheme that involves an encoder and a corresponding decoder. During the encoding step, the AE maps the input estimated CSI into $D_{AE}$ dimensional encoded values. Then, during the decoder step, the AE preprocessing maps the encoded values back to the original space, which is a CSI reconstruction. During the learning process, both steps are optimized to minimize a cost function, typically the mean square error (MSE) between the input and the output \cite{Srinivasan}. $\widehat{\mathbf{H}}_1(t)$ and $\widehat{\mathbf{H}}_u(t+1)$ are respectively used as input of the AE (input of the encoder) at time $t$ and $t+1$. In the comparisons with AE based preprocessing in the results section, we will provide details regarding the number of neural network layers, the numberof neurons per layer, the activation functions, etc. 

\subsubsection{ReProCS preprocessing} The ReProCS preprocessing method \cite{Vaswani} was also adapted to our proposed PLA framework as follows. We considered that the channel measurement from $t$ to $t+1$ lied in a slowly changing low-dimensional subspace and denoted by $\mathbf{m}_1(t) \in \mathbb{R}^{N_b}$ a vector corresponding to a row of the estimated CSI $\widehat{\mathbf{H}}_1(t)$ at time $t$ with
\(
    \mathbf{m}_1(t) = \mathbf{\ell}_1(t) + \mathbf{s}_1(t),
\)
where $\mathbf{\ell}_1(t)$ is the true data and $\mathbf{s}_1(t)$ the outlier. The objective was to track $\mathbf{\ell}_1(t)$ and the span\footnote{Span refers to the set of all linear combinations of the columns of the matrix that defines the subspace} of the subspace at time $t$ by using an estimated initial subspace. The initial subspace was computed using the training data, i.e., the first few samples  $\mathbf{\ell}_1(t)$ from $\widehat{\mathbf{H}}_1(t)$. Then, during the next time $t+1$, the same training data was applied to track $\mathbf{\ell}_u(t+1)$ and the span of the subspace. Note that 
\(
    \mathbf{m}_u(t) = \mathbf{\ell}_u(t) + \mathbf{s}_u(t),
\)
and $\mathbf{m}_u(t+1) \in \mathbb{R}^{N_b}$ is a vector corresponding a row of the estimated CSI $\widehat{\mathbf{H}}_u(t+1)$.
\vspace{-0.3cm}
\subsection{Proposed Adaptive RPCA} \label{sb_arpca}
\begin{figure}[t]
\centering
\includegraphics[width=0.36\textwidth]{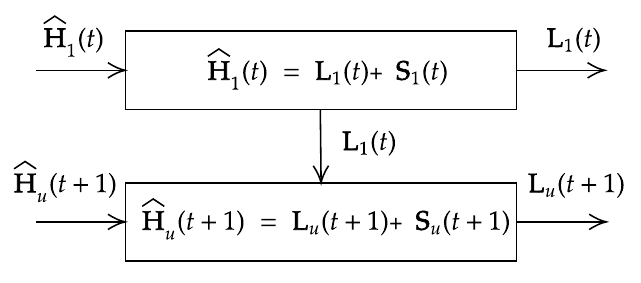}  
\caption{A-RPCA preprocessing. The effective CSI (low-rank matrix) at time $t$ is used as helper during next time $t+1$ in the temporal consistency term to perform A-RPCA.
}   
\label{arpca_scheme}
\end{figure}
As mentioned in \cite{passah2024wcnc}, we describe here the preprocessing scheme based on the formulation of the TR-PCP optimization problem, which results in the proposed A-RCPA algorithm. A-RPCA is represented in Fig. \ref{arpca_scheme} and is described in the following.
Unlike the previous methods, we exploit the temporal correlation by introducing a term in the optimization that tracks the correlation between the effective CSI at time $t$ and the effective CSI at $t+1$. 

\subsubsection{Enrollment phase}
During the offline phase, the standard RPCA method described previously is applied to output $\mathbf{L}_1(t)$ and $\mathbf{S}_1(t)$. The low-rank matrix $\mathbf{L}_1(t)$ will be used during next authentication phase.



\subsubsection{Authentication phase}

First, let us note that  $\mathbf{L}_1(t)$ is recorded at time $t$. During this phase, Bob applies the A-RPCA method to the new recorded CSI measurements $\widehat{\mathbf{H}}_u(t+1)$, taking into account the previous recorded CSI $\mathbf{L}_1(t)$. In the proposed A-RPCA algorithm, a temporal consistency term $\gamma_u \bigl\|\mathbf{L}_u(t+1) - \beta_u\mathbf{L}_1(t)\bigr\|_{F}^{2}$ is introduced in the objective function of the optimization. Therefore, the proposed TR-PCP optimization problem that results in the proposed A-RPCA algorithm is described by 
\begin{equation}\label{arpca}
\begin{aligned}
 \min_{\mathbf{L}_u(t+1), \, \mathbf{S}_u(t+1)} & \|\mathbf{L}_u(t+1)\|_{*} + \lambda\|\mathbf{S}_u(t+1)\|_{1} \; +\; \\
& \gamma_u \bigl\|\mathbf{L}_u(t+1)  - \beta_u \mathbf{L}_1(t)\bigr\|_{F}^{2} \\ 
& \text{s.t.} ~~ \mathbf{L}_u(t+1) + \mathbf{S}_u(t+1) = \widehat{\mathbf{H}}_u(t+1).
\end{aligned}
\end{equation}
where the new CSI at time $t+1$ is decomposed into the low-rank component $\mathbf{L}_u(t+1)$ and the sparse part $\mathbf{S}_u(t+1)$ with $\widehat{\mathbf{H}}_u(t+1) =  \mathbf{L}_u(t+1) + \mathbf{S}_u(t+1)$. 

In (\ref{arpca}), the temporal consistency term represents the square of the Frobenius norm ($\|\cdot\|_F$) between the current matrix $\mathbf{L}_u(t+1)$ and the previous matrix $\mathbf{L}_1(t)$. $\beta_u$, $u \in \{1, 2\}$, is the estimated correlation coefficient between $\widehat{\mathbf{H}}_1(t)$ and $\widehat{\mathbf{H}}_u(t+1)$ and $\gamma_u$ is a regularization parameter used to control the temporal dependency between $\mathbf{L}_u(t+1)$ and $\mathbf{L}_1(t)$ by determining how much importance is given to obtain a smooth evolution between them. When we are in the normal case, i.e., Alice, the impact of the CSI at $t+1$ is well considered in contrast to the other case where it is discriminated. 

In (\ref{arpca}), the nuclear norm, the $\ell_1$-norm, the square of the Frobenius norm and the constraint are convex. Consequently, the TR-PCP optimization problem is convex. As done for PCP, an augmented Lagrangian multiplier (ALM) \cite{AugmentedL} approach using ADMM \cite{Sparse} is adopted in order to split the problem into two subproblems; one that serves to find the optimal value of $\mathbf{L}_u(t+1)$ and the other that serves to find the optimal value of $\mathbf{S}_u(t+1)$. In fact, ADMM applied to low-rank signal processing \cite{jidf,sor} allows one to solve convex optimization problems by breaking them into smaller problems. Therefore, the augmented Lagrangian is first minimized with respect to $\mathbf{L}_u(t+1)$, then with respect to $\mathbf{S}_u(t+1)$, followed by an update of the Lagrange multiplier. We adopt the following notation for simplicity when solving the problem in (\ref{arpca}): $\mathbf{L}_u := \mathbf{L}_u(t+1)$, $\mathbf{S}_u := \mathbf{S}_u(t+1)$, $\widehat{\mathbf{H}}_u := \widehat{\mathbf{H}}_u(t+1)$ and $\mathbf{L}_1 =: \mathbf{L}_1(t)$.
Therefore, the optimization problem becomes
\begin{equation}\label{tr-pcp}
\begin{aligned}
\min_{\mathbf{L}_u, \, \mathbf{S}_u} \; & \|\mathbf{L}_u\|_{*} + \lambda\|\mathbf{S}_u\|_{1} + \gamma_u \,\bigl\|\mathbf{L}_u - \beta_u \,\mathbf{L}_1\bigr\|_{F}^{2}, \;\; \\
&\text{s.t.} \;\; \mathbf{L}_u + \mathbf{S}_u = \widehat{\mathbf{H}}_u,
\end{aligned}
\end{equation}
The augmented Lagrangian of (\ref{tr-pcp}) is given by 
\begin{equation} \label{auL}
\begin{aligned}
\mathcal{L}_\mu(\mathbf{L}_u~,\mathbf{S}_u~,\mathbf{Y})
= ~& \|\mathbf{L}_u\|_{*} + \lambda\|\mathbf{S}_u\|_1 + 
 \gamma_u \|\mathbf{L}_u-\beta_u \mathbf{L}_1\|_F^2 ~+\\
 & \hspace{-0.4cm} \langle \mathbf{Y} ~,~ \widehat{\mathbf{H}}_u - \mathbf{L}_u - \mathbf{S}_u \rangle + \tfrac{\mu}{2} \|\widehat{\mathbf{H}}_u - \mathbf{L}_u - \mathbf{S}_u\|_F^2,
\end{aligned}
\end{equation}
where $\mathbf{Y}$ is the Lagrangian multiplier, $\mu>0$ is the penalty term and $\langle \cdot \rangle$ is the trace inner product. 

Let us now describe the ADMM splitting as follows: 
\begin{enumerate}[label=(\alph*)]
\item $\mathbf{L}_u$-update\\ 
We fix $\mathbf{S}_u=\mathbf{S}_u^{k}$, $\mathbf{Y}=\mathbf{Y}^k$ and solve equation (\ref{eq_L}) where $\mathcal{L}_{\mu_k}$ is minimized with respect to $\mathbf{L}_u$ as follows:
\begin{equation}
\mathbf{L}_u^{k+1} = \arg\min_{\mathbf{L}_u} \mathcal{L}_{\mu_k}(\mathbf{L}_u,\mathbf{S}_u^k,\mathbf{Y}^k).
\label{eq_L}
\end{equation}
The closed-form solution of the optimal value of $\mathbf{L}_u^{k+1}$ is given in Proposition \ref{l_updtae}. 

\begin{proposition}\label{l_updtae}  
The closed-form solution of (\ref{eq_L}) is given by
\begin{equation}
\mathbf{L}_u^{k+1} =  \mathcal{D}_{\tau}(\mathbf{T}^k) = \mathbf{U}\,\Bigl[\Sigma - \tau\,\mathbf{I}\Bigr]_{+}\,\mathbf{V}^{\top},\\
\end{equation}
with,  
  $ \mathbf{T}^k = \dfrac{\mu_k\mathbf{B}^k + 2\gamma_u\beta_u\mathbf{L}_1}{\mu_k + 2\gamma_u}$ and
   $ \mathbf{B}^k = \widehat{\mathbf{H}}_u - \mathbf{S}_u^k + \tfrac{1}{\mu_k}\mathbf{Y}^k$.
 $\mathcal{D}_{\tau}(\mathbf{T}^k)$ is the singular-value thresholding, that is, $\mathbf{T}^k=\mathbf{U}\mathbf{\Sigma} \mathbf{V}^\top$, then $[\mathbf{\Sigma}-\tau \mathbf{I}]_+$ shrinks singular values. $\mathbf{I}$ is the identity matrix and \((x)_+ = \max\{x,0\}\) is applied element-wise.
 \end{proposition}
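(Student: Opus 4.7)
The plan is to reduce the $\mathbf{L}_u$-subproblem to a single nuclear-norm proximal operator evaluated at a data-dependent center $\mathbf{T}^k$, and then to invoke the standard singular-value thresholding (SVT) result.

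First, I would isolate the $\mathbf{L}_u$-dependent terms of $\mathcal{L}_{\mu_k}$ in (\ref{auL}) with $\mathbf{S}_u=\mathbf{S}_u^k$ and $\mathbf{Y}=\mathbf{Y}^k$ fixed, dropping constants. This leaves
\begin{equation*}
\|\mathbf{L}_u\|_{*} + \gamma_u\|\mathbf{L}_u-\beta_u\mathbf{L}_1\|_F^{2} - \langle \mathbf{Y}^k,\mathbf{L}_u\rangle + \tfrac{\mu_k}{2}\|\widehat{\mathbf{H}}_u - \mathbf{L}_u - \mathbf{S}_u^k\|_F^{2}.
\end{equation*}
The second step is to combine the linear term $-\langle\mathbf{Y}^k,\mathbf{L}_u\rangle$ into the augmented quadratic by completing the square; a direct expansion shows that, up to a constant independent of $\mathbf{L}_u$,
\begin{equation*}
\tfrac{\mu_k}{2}\|\widehat{\mathbf{H}}_u-\mathbf{L}_u-\mathbf{S}_u^k\|_F^{2} - \langle \mathbf{Y}^k,\mathbf{L}_u\rangle = \tfrac{\mu_k}{2}\|\mathbf{L}_u-\mathbf{B}^k\|_F^{2} + c_1,
\end{equation*}
with $\mathbf{B}^k = \widehat{\mathbf{H}}_u - \mathbf{S}_u^k + \tfrac{1}{\mu_k}\mathbf{Y}^k$, matching the definition in the statement.

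Next, I would merge the two remaining quadratic terms $\tfrac{\mu_k}{2}\|\mathbf{L}_u-\mathbf{B}^k\|_F^{2}$ and $\gamma_u\|\mathbf{L}_u-\beta_u\mathbf{L}_1\|_F^{2}$ into a single quadratic. Expanding the squared Frobenius norms and collecting the $\|\mathbf{L}_u\|_F^{2}$ and $\langle\mathbf{L}_u,\cdot\rangle$ contributions gives
\begin{equation*}
\bigl(\tfrac{\mu_k}{2}+\gamma_u\bigr)\|\mathbf{L}_u\|_F^{2} - \bigl\langle \mathbf{L}_u,\;\mu_k\mathbf{B}^k + 2\gamma_u\beta_u\mathbf{L}_1\bigr\rangle + c_2,
\end{equation*}
which is again a perfect square, equal up to a constant to $(\tfrac{\mu_k}{2}+\gamma_u)\|\mathbf{L}_u-\mathbf{T}^k\|_F^{2}$ for precisely $\mathbf{T}^k = (\mu_k\mathbf{B}^k+2\gamma_u\beta_u\mathbf{L}_1)/(\mu_k+2\gamma_u)$. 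The subproblem therefore collapses to
\begin{equation*}
\mathbf{L}_u^{k+1} = \arg\min_{\mathbf{L}_u}\; \tau\,\|\mathbf{L}_u\|_{*} + \tfrac{1}{2}\|\mathbf{L}_u - \mathbf{T}^k\|_F^{2},\qquad \tau = \tfrac{1}{\mu_k + 2\gamma_u}.
\end{equation*}

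Finally, I would invoke the Cai--Cand\`es--Shen singular-value thresholding theorem, which states that the proximal operator of $\tau\|\cdot\|_{*}$ at $\mathbf{T}^k$ is $\mathcal{D}_{\tau}(\mathbf{T}^k) = \mathbf{U}[\Sigma-\tau\mathbf{I}]_{+}\mathbf{V}^{\top}$, where $\mathbf{T}^k=\mathbf{U}\Sigma\mathbf{V}^{\top}$ is the SVD. This yields exactly the claimed closed form. The only delicate piece is the bookkeeping in the completion-of-the-square step that produces $\mathbf{T}^k$ as a convex-like combination of $\mathbf{B}^k$ and $\beta_u\mathbf{L}_1$; everything else is either a definition unpacking or the invocation of a standard proximal-operator identity, so no real obstacle is expected.
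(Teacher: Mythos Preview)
Your proposal is correct and follows essentially the same route as the paper's own proof: isolate the $\mathbf{L}_u$-dependent terms, complete the square with the linear multiplier term to obtain $\mathbf{B}^k$, merge the two Frobenius quadratics into a single one centered at $\mathbf{T}^k$ with coefficient $(\mu_k+2\gamma_u)/2$, and then invoke the singular-value thresholding identity with threshold $\tau=1/(\mu_k+2\gamma_u)$. The only cosmetic difference is that the paper carries the algebra with the auxiliary notation $\alpha_1=\mu_k$, $\alpha_2=2\gamma_u$, whereas you do the completion of the square directly.
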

\begin{proof}
See Appendix \ref{appA}
\end{proof}

\item $\mathbf{S}_u$-update\\
 For the \(\mathbf{S}_u\)-update, we fix \(\mathbf{L}_u = \mathbf{L}_u^{k+1}\), \(\mathbf{Y} = \mathbf{Y}^{k}\) and then solve equation (\ref{eq_S}) where $\mathcal{L}_{{\mu}_k}$ is minimized with respect to $\mathbf{S}_u$, which yields
\begin{equation} \label{eq_S}
\mathbf{S}_u^{k+1} = \arg\min_{\mathbf{S}_u} \; \mathcal{L}_{\mu}(\mathbf{L}_u^{k+1},\mathbf{S}_u,\mathbf{Y}^k)
\end{equation}
The closed-form solution of the optimal value of $\mathbf{S}_u^{k+1}$ is given in Proposition \ref{s_updtae}. 

\begin{proposition}\label{s_updtae}  
The closed-form solution of (\ref{eq_S}) is given by
\begin{equation}
\mathbf{S}_u^{k+1} = \mathcal{S}_{\lambda/\mu_{k}}\bigl(\mathbf{C}^{\,k}\bigr), 
\end{equation}
with
$\mathbf{C}^{\,k}
= \widehat{\mathbf{H}}_u - \mathbf{L}_u^{k+1} + \frac{1}{\mu_{k}}\,\mathbf{Y}^{k}$
and $\mathcal{S}_{\lambda/\mu_{k}}(\cdot)$ is the soft‐threshold operator applied entrywise to $\mathbf{C}^{\,k}$,
\begin{equation}
\mathcal{S}_{\lambda/\mu_{k}}(c)
= \mathrm{sign}(c)\,\max\Bigl(\lvert c\rvert - \frac{\lambda}{\mu_{k}},\,0\Bigr).
\end{equation}
\end{proposition}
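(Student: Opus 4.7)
The plan is to isolate the terms of the augmented Lagrangian in \eqref{auL} that depend on $\mathbf{S}_u$, complete the square so the problem reduces to a standard proximal step of the $\ell_1$-norm, and then invoke the fact that the proximal operator of $\tau\|\cdot\|_1$ is the entrywise soft-threshold.

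First, I would fix $\mathbf{L}_u=\mathbf{L}_u^{k+1}$ and $\mathbf{Y}=\mathbf{Y}^k$ in \eqref{auL} and discard the $\mathbf{S}_u$-independent terms $\|\mathbf{L}_u^{k+1}\|_*$ and $\gamma_u\|\mathbf{L}_u^{k+1}-\beta_u\mathbf{L}_1\|_F^2$. The remaining objective reads
\begin{equation*}
\lambda\|\mathbf{S}_u\|_1 \;-\; \langle \mathbf{Y}^k,\mathbf{S}_u\rangle \;+\; \tfrac{\mu_k}{2}\,\bigl\|\widehat{\mathbf{H}}_u-\mathbf{L}_u^{k+1}-\mathbf{S}_u\bigr\|_F^2 .
\end{equation*}
Next, I would absorb the linear inner product $\langle \mathbf{Y}^k,\mathbf{S}_u\rangle$ into the quadratic penalty by completing the square, so that the two last terms combine into $\tfrac{\mu_k}{2}\,\|\mathbf{S}_u-\mathbf{C}^k\|_F^2$ up to an additive constant independent of $\mathbf{S}_u$, with $\mathbf{C}^k=\widehat{\mathbf{H}}_u-\mathbf{L}_u^{k+1}+\tfrac{1}{\mu_k}\mathbf{Y}^k$. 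The subproblem then becomes
\begin{equation*}
\mathbf{S}_u^{k+1}\;=\;\arg\min_{\mathbf{S}_u}\;\tfrac{1}{2}\,\bigl\|\mathbf{S}_u-\mathbf{C}^k\bigr\|_F^2 \;+\; \tfrac{\lambda}{\mu_k}\,\|\mathbf{S}_u\|_1,
\end{equation*}
which is separable across entries because both the Frobenius norm squared and the $\ell_1$-norm decompose element-wise.

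For each scalar entry, the one-dimensional problem $\min_{s}\,\tfrac{1}{2}(s-c)^2+\tau|s|$ with $\tau=\lambda/\mu_k$ is a classical LASSO-type shrinkage problem. I would conclude by recalling the standard identity for its unique minimiser, namely the soft-threshold $\mathrm{sign}(c)\,\max(|c|-\tau,0)$, obtained either from the KKT/subdifferential conditions of $|\cdot|$ (which simply require $0\in s-c+\tau\partial|s|$) or from a direct case analysis on the sign of $s$. Applying this entrywise to $\mathbf{C}^k$ yields $\mathbf{S}_u^{k+1}=\mathcal{S}_{\lambda/\mu_k}(\mathbf{C}^k)$.

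The only mildly subtle step is the square completion, since one must verify that the cross term $\langle\mathbf{Y}^k,\mathbf{S}_u\rangle$ combines cleanly with $\tfrac{\mu_k}{2}\|\cdot\|_F^2$; once this is done, the rest is a direct invocation of the proximal map of $\|\cdot\|_1$, so no additional convergence or separability argument is needed.
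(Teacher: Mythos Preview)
Your proposal is correct and follows essentially the same route as the paper's proof: drop the $\mathbf{S}_u$-independent terms, complete the square to rewrite the linear-plus-quadratic part as $\tfrac{\mu_k}{2}\|\mathbf{S}_u-\mathbf{C}^k\|_F^2$ with $\mathbf{C}^k=\widehat{\mathbf{H}}_u-\mathbf{L}_u^{k+1}+\tfrac{1}{\mu_k}\mathbf{Y}^k$, observe entrywise separability, and conclude via the subdifferential condition for the scalar $\ell_1$ proximal map. The paper carries out the square completion a bit more explicitly, but the argument is the same.
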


\begin{proof}
See Appendix \ref{appB}
\end{proof}

\item Once $\mathbf{L}_u^{k+1}$ and $\mathbf{S}_u^{k+1}$ are calculated, the Lagrange multiplier $\mathbf{Y}$ is updated as 
\begin{equation}
\mathbf{Y}^{k+1} = \mathbf{Y}^k + \mu_k(\widehat{\mathbf{H}}_u - \mathbf{L}_u^{k+1} - \mathbf{S}_u^{k+1}).
\end{equation}
The penalty parameter $\mu_{k+1} = \min \{ \rho \mu_k, \mu_{\max} \}
$ with $\rho > 1$ is the amplification factor.  
\end{enumerate}

The proposed A-RPCA algorithm is summarized in algorithm \ref{algo1}. 
The penalty term is initialized as \(
      \mu_{0} = \frac{3}{\sigma_{\max}\bigl(\widehat{\mathbf{H}}_u\bigr)},
    \)
    where \(\sigma_{\max}(\widehat{\mathbf{H}}_u)\) is the largest singular value of the  matrix.  
    The constant \(3\) in the expression of \(\mu_0\) tightens the first augmented Lagrangian penalty so that the optimization constraint is enforced more firmly from the beginning. The update factor $\rho = 1.5$ accelerates convergence when the primal residual remains large, while keeping the subproblems well-conditioned, so that small input perturbations will not cause disproportionately large errors in the output.
    The regularization parameter is set as in \cite{Candès},  \(\lambda = \frac{1}{\sqrt{\max(N_s,\,N_b)}} \).
    The temporal regularization parameter \(\gamma = \beta_u\). It is chosen to be equal to the correlation coefficient $\beta_u$ so that the impact of the user Alice $2$ is discriminated while improving the ability of Bob to authenticate Alice $1$.  The stopping criterion is when the normalized primal residual (i.e.\ the relative reconstruction error) \(
    \frac{\bigl\|\widehat{\mathbf{H}} - \mathbf{L} - \mathbf{S}\bigr\|_{F}}
     {\bigl\|\widehat{\mathbf{H}}\bigr\|_{F}}\) falls below a tolerance of \(tol = 10^{-7}\).

\begin{algorithm}[!t] 
\caption{Proposed A-RPCA algorithm}\label{algo1}
\begin{algorithmic}[1]
    \STATE Initialize: $\mathbf{S}_u^0=\mathbf{0}$, $\mathbf{Y}^0=\mathbf{0}$, \, $\mu_0, \, \rho, \, \mu_{\max}, \, \lambda,\, \gamma_u, \, \beta_u$ \vspace{0.05cm}
  \STATE $\mathbf{B}^k = \widehat{\mathbf{H}}_u - \mathbf{S}_u^k + \frac{1}{\mu_k}\mathbf{Y}^k$ \vspace{0.05cm}
  \STATE $\mathbf{T}^k = \frac{\mu_k\,\mathbf{B}^k + 2\gamma_u \beta_u \mathbf{L}_1}{\mu_k+2\gamma_u}$ \vspace{0.05cm}
  \STATE $\mathbf{T}^k=\mathbf{U}\mathbf{\Sigma} \mathbf{V}^T$ \vspace{0.05cm}
  \STATE $\mathbf{L}_u^{k+1}
= \mathcal{D}_{\tau}(\mathbf{T}^k)
= \mathbf{U}\,\mathcal{S}_{\tau}(\mathbf{\Sigma})\,\mathbf{V}^\top$ \vspace{0.05cm}
  \STATE $\mathbf{C}^k = \widehat{\mathbf{H}}_u - \mathbf{L}_u^{k+1} + \frac{1}{\mu_k}\mathbf{Y}^k$ \vspace{0.05cm}
  \STATE $\mathbf{S}_u^{k+1}=\mathcal{S}_{\lambda/\mu_k}(\mathbf{C}^k)$ \vspace{0.05cm}
  \STATE $\mathbf{Y}^{k+1}=\mathbf{Y}^k + \mu_k(\widehat{\mathbf{H}}_u - \mathbf{L}_u^{k+1} - \mathbf{S}_u^{k+1})$ \vspace{0.05cm}
  \STATE $\mu_k\gets\min \{ \rho \mu_k, \mu_{\max} \}$ \vspace{0.05cm} 
  \STATE Output: $\mathbf{L}_u$ and $\mathbf{S}_u$
\end{algorithmic}
\end{algorithm}

\subsection{Convergence Analysis of the A-RPCA Algorithm} \label{analysis}

The convergence of the proposed A-RPCA algorithm \eqref{algo1} is considered here. Let us write the optimization problem as 
\begin{equation}
\min_{\mathbf{L}_u, \, \mathbf{S}_u}\;
f(\mathbf{L}_u) + g(\mathbf{S}_u)
\quad
\text{s.t.}\quad
\mathbf{L}_u + \mathbf{S}_u = \widehat{\mathbf{H}}_u,
\label{eqARPCA}
\end{equation}
where
$f(\mathbf{L}_u) = \|\mathbf{L}_u\|_* + \gamma\|\mathbf{L}_u - \beta\mathbf{L}_1\|_F^2$ and $g(\mathbf{S}_u) = \lambda\|\mathbf{S}_u\|_1$.

The sequence $\{(\mathbf{L}_u^k,\mathbf{S}_u^k,\mathbf{Y}^k)\}$ in the form of ADMM updates is given as follows:
\begin{equation}
\mathbf{L}_u^{k+1} = \arg\min_{\mathbf{L}_u}\;
f(\mathbf{L}_u)
+ \tfrac{\mu_k}{2}\big\|\mathbf{L}_u - (\widehat{\mathbf{H}}_u - \mathbf{S}_u^k + \tfrac{1}{\mu_k}\mathbf{Y}^k)\big\|_F^2,
\end{equation}
\begin{equation}
\mathbf{S}_u^{k+1} = \arg\min_{\mathbf{S}_u}\;
g(\mathbf{S}_u)
+ \tfrac{\mu_k}{2}\big\|\mathbf{S}_u - (\widehat{\mathbf{H}}_u - \mathbf{L}_u^{k+1} + \tfrac{1}{\mu_k}\mathbf{Y}^k)\big\|_F^2,
\end{equation}
\begin{equation}
\mathbf{Y}^{k+1} = \mathbf{Y}^k + \mu_k(\widehat{\mathbf{H}}_u - \mathbf{L}_u^{k+1} - \mathbf{S}_u^{k+1}).
\label{yk_1}
\end{equation}
By first-order optimality of these proximal subproblems, we obtain:
\begin{equation}
\left\{
\begin{aligned}
0 &\in \partial f(\mathbf{L}_u^{k+1}) - \mathbf{Y}^k - \mu_k(\widehat{\mathbf{H}}_u - \mathbf{L}_u^{k+1} - \mathbf{S}_u^k) \\
0 &\in \partial g(\mathbf{S}_u^{k+1}) - \mathbf{Y}^k - \mu_k(\widehat{\mathbf{H}}_u - \mathbf{L}_u^{k+1} - \mathbf{S}_u^{k+1}). 
\end{aligned}
\right.
\end{equation}
Substituting the update of equation (\ref{yk_1})
yields:
\begin{equation}
\left\{
\begin{aligned}
& \mathbf{Y}^{k+1} \in \partial f(\mathbf{L}_u^{k+1}) + \mu_k(\mathbf{S}_u^{k+1}-\mathbf{S}_u^k),\\
& \mathbf{Y}^{k+1} \in \partial g(\mathbf{S}_u^{k+1}).
\end{aligned}
\right.
\label{equY}
\end{equation}

\begin{lemma} \label{lemY}
The sequence $\{\mathbf{Y}^k\}$ is bounded.
\end{lemma}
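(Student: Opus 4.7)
\bigskip

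\noindent\textbf{Proof plan for Lemma \ref{lemY}.} My approach is to exploit the second inclusion in \eqref{equY}, namely $\mathbf{Y}^{k+1} \in \partial g(\mathbf{S}_u^{k+1})$, where $g(\mathbf{S}_u) = \lambda \|\mathbf{S}_u\|_1$. Since the $\ell_1$-norm is a separable, positively homogeneous convex function, its subdifferential is uniformly bounded in the $\ell_\infty$ (entrywise) norm: every $\mathbf{G} \in \partial (\lambda \|\mathbf{S}_u\|_1)$ satisfies $|G_{ij}| \leq \lambda$ for every entry, with $G_{ij} = \lambda\,\mathrm{sign}((\mathbf{S}_u)_{ij})$ on the support and $G_{ij} \in [-\lambda,\lambda]$ off the support. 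This structural fact is the key ingredient, and it is independent of the particular iterate $\mathbf{S}_u^{k+1}$.

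The steps I would carry out, in order, are as follows. First, I would invoke the second relation of \eqref{equY} to write $\mathbf{Y}^{k+1} \in \partial(\lambda\|\mathbf{S}_u^{k+1}\|_1)$ for every $k \geq 0$. Second, using the characterization of the $\ell_1$-subdifferential above, I would conclude the entrywise bound $|(\mathbf{Y}^{k+1})_{ij}| \leq \lambda$, which immediately yields
\begin{equation}
\|\mathbf{Y}^{k+1}\|_F \;\leq\; \lambda\sqrt{N_s N_b}, \qquad \forall\, k \geq 0.
\end{equation}
Third, since the initialization $\mathbf{Y}^0 = \mathbf{0}$ trivially satisfies this bound, the entire sequence $\{\mathbf{Y}^k\}_{k \geq 0}$ is uniformly bounded in Frobenius norm by $\lambda\sqrt{N_s N_b}$, which establishes the claim.

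The argument is short because all the work has already been done by the first-order optimality conditions \eqref{equY}; the boundedness of $\{\mathbf{Y}^k\}$ is essentially a consequence of the fact that the sparsity-promoting term in the objective is Lipschitz, so its subgradients are uniformly bounded. The only mildly subtle point, which I would state explicitly to avoid confusion, is that the temporal consistency term $\gamma_u\|\mathbf{L}_u - \beta_u \mathbf{L}_1\|_F^2$ does \emph{not} interfere with this argument: it appears in $f(\mathbf{L}_u)$ and therefore only enters the first inclusion of \eqref{equY}, not the second. Hence no control on $\mathbf{L}_u^{k+1}$, $\mathbf{L}_1$, or the penalty sequence $\{\mu_k\}$ is required, which is precisely what makes this lemma a clean starting point for the subsequent convergence analysis. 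I do not foresee a genuine obstacle; the main care needed is merely to cite the $\ell_1$-subdifferential characterization correctly and to note that the bound is uniform in $k$.
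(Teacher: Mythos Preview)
Your proposal is correct and follows essentially the same route as the paper: the paper also invokes the inclusion $\mathbf{Y}^{k+1}\in\lambda\,\partial\|\mathbf{S}_u^{k+1}\|_1$ from \eqref{equY}, reads off the entrywise bound $\|\mathbf{Y}^{k+1}\|_\infty\le\lambda$, and converts it to $\|\mathbf{Y}^{k+1}\|_F^2\le\lambda^2 N_sN_b$. Your additional remarks on $\mathbf{Y}^0=\mathbf{0}$ and on the temporal-consistency term being irrelevant to this step are correct and make the argument slightly more explicit than the paper's version.
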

\begin{proof}
From equation (\ref{yk_1}) and (\ref{equY}), we have
\begin{equation}
\mathbf{Y}^{k+1}
= \mathbf{Y}^k + \mu_k(\widehat{\mathbf{H}}_u - \mathbf{L}_u^{k+1} - \mathbf{S}_u^{k+1})
\in \lambda\,\partial\|\mathbf{S}_u^{k+1}\|_1,
\end{equation}
$\implies$ every entry of $\mathbf{Y}^{k+1}$ is $\le \lambda$ and $\|\mathbf{Y}^{k+1}\|_\infty\le \lambda$.
Then $\|\mathbf{Y}^{k+1}\|_F^2=\sum_{ij}(Y^{k+1}_{ij})^2\le N_sN_b\,\|\mathbf{Y}^{k+1}\|_\infty^2$, 
and $\|\mathbf{Y}^{k+1}\|_F^2 \le \lambda^2 N_sN_b$.
\end{proof}

\begin{lemma} \label{lemLS}
The sequence $\{(\mathbf{L}_u^k,\mathbf{S}_u^k)\}$ is bounded.
\end{lemma}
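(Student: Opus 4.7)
The plan is to exploit the coercivity of $f+g$ jointly with the boundedness of $\{\mathbf{Y}^k\}$ established in Lemma~\ref{lemY} and the geometric growth of the penalty $\mu_k$. Observe that $f(\mathbf{L}_u)=\|\mathbf{L}_u\|_*+\gamma\|\mathbf{L}_u-\beta\mathbf{L}_1\|_F^2$ is coercive in $\mathbf{L}_u$ thanks to the Frobenius penalty (since $\|\mathbf{L}_u-\beta\mathbf{L}_1\|_F\ge\|\mathbf{L}_u\|_F-\beta\|\mathbf{L}_1\|_F$), while $g(\mathbf{S}_u)=\lambda\|\mathbf{S}_u\|_1$ is coercive via the elementary inequality $\|\mathbf{S}_u\|_1\ge\|\mathbf{S}_u\|_F$. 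Hence the problem reduces to showing that $f(\mathbf{L}_u^{k+1})+g(\mathbf{S}_u^{k+1})$ remains uniformly bounded along the iterates, after which the boundedness of $\{\mathbf{L}_u^k\}$ and $\{\mathbf{S}_u^k\}$ follows by coercivity in each variable separately.

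First I would substitute the dual update $\widehat{\mathbf{H}}_u-\mathbf{L}_u^{k+1}-\mathbf{S}_u^{k+1}=\tfrac{1}{\mu_k}(\mathbf{Y}^{k+1}-\mathbf{Y}^k)$ into the augmented Lagrangian \eqref{auL}. Expanding the inner product together with the quadratic penalty collapses the cross terms and yields the clean identity
\begin{equation}
\mathcal{L}_{\mu_k}(\mathbf{L}_u^{k+1},\mathbf{S}_u^{k+1},\mathbf{Y}^k)=f(\mathbf{L}_u^{k+1})+g(\mathbf{S}_u^{k+1})+\frac{\|\mathbf{Y}^{k+1}\|_F^2-\|\mathbf{Y}^k\|_F^2}{2\mu_k}.
\end{equation}
Since Lemma~\ref{lemY} gives $\|\mathbf{Y}^k\|_F^2\le\lambda^2 N_sN_b$, the trailing term is uniformly bounded by $\lambda^2 N_sN_b/\mu_0$, so it suffices to prove that $\mathcal{L}_{\mu_k}(\mathbf{L}_u^{k+1},\mathbf{S}_u^{k+1},\mathbf{Y}^k)$ admits a uniform upper bound.

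Next I would chain this quantity across iterations. The two successive block minimizations in Propositions~\ref{l_updtae} and \ref{s_updtae} imply $\mathcal{L}_{\mu_k}(\mathbf{L}_u^{k+1},\mathbf{S}_u^{k+1},\mathbf{Y}^k)\le\mathcal{L}_{\mu_k}(\mathbf{L}_u^k,\mathbf{S}_u^k,\mathbf{Y}^k)$, and the right-hand side can be rewritten in terms of $\mathcal{L}_{\mu_{k-1}}(\mathbf{L}_u^k,\mathbf{S}_u^k,\mathbf{Y}^{k-1})$ by accounting for the multiplier shift $\mathbf{Y}^{k-1}\to\mathbf{Y}^k$ and the penalty shift $\mu_{k-1}\to\mu_k$. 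A direct computation using $\widehat{\mathbf{H}}_u-\mathbf{L}_u^k-\mathbf{S}_u^k=\tfrac{1}{\mu_{k-1}}(\mathbf{Y}^k-\mathbf{Y}^{k-1})$ shows that the combined correction equals $\tfrac{\mu_k+\mu_{k-1}}{2\mu_{k-1}^2}\|\mathbf{Y}^k-\mathbf{Y}^{k-1}\|_F^2=\tfrac{\rho+1}{2\mu_{k-1}}\|\mathbf{Y}^k-\mathbf{Y}^{k-1}\|_F^2$. Telescoping then gives
\begin{equation}
\mathcal{L}_{\mu_k}(\mathbf{L}_u^{k+1},\mathbf{S}_u^{k+1},\mathbf{Y}^k)\le\mathcal{L}_{\mu_0}(\mathbf{L}_u^0,\mathbf{S}_u^0,\mathbf{Y}^0)+\frac{\rho+1}{2}\sum_{j=0}^{k-1}\frac{\|\mathbf{Y}^{j+1}-\mathbf{Y}^j\|_F^2}{\mu_j},
\end{equation}
and since $\mu_j=\mu_0\rho^j$ with $\rho>1$ while $\|\mathbf{Y}^{j+1}-\mathbf{Y}^j\|_F^2\le 4\lambda^2 N_sN_b$ by Lemma~\ref{lemY}, the tail is dominated by a convergent geometric series, delivering the desired uniform upper bound.

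Combining both estimates, $f(\mathbf{L}_u^{k+1})+g(\mathbf{S}_u^{k+1})$ stays uniformly bounded, and coercivity of each summand in its own variable concludes that both $\|\mathbf{L}_u^{k+1}\|_F$ and $\|\mathbf{S}_u^{k+1}\|_F$ remain bounded. The main technical obstacle will be the bookkeeping step relating $\mathcal{L}_{\mu_k}$ at $(\mathbf{L}_u^k,\mathbf{S}_u^k,\mathbf{Y}^k)$ to $\mathcal{L}_{\mu_{k-1}}$ at the previous iterate, because the multiplier and the penalty parameter change simultaneously; one must carefully track both the $\langle\mathbf{Y}^k-\mathbf{Y}^{k-1},\cdot\rangle$ term and the $\tfrac{\mu_k-\mu_{k-1}}{2}\|\cdot\|_F^2$ correction, and verify that when multiplied by the dual-update rate $1/\mu_{k-1}$ they combine into a summable contribution compatible with the geometric update rule $\mu_k=\min\{\rho\mu_{k-1},\mu_{\max}\}$, including the plateau regime once $\mu_k$ saturates at $\mu_{\max}$.
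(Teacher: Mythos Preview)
Your approach is essentially the same as the paper's: both telescope the augmented Lagrangian via the block-minimization inequalities, absorb the multiplier/penalty shifts as corrections of the form $(\mu_{k}+\tfrac{\mu_k-\mu_{k-1}}{2})\|\mathbf{P}^k\|_F^2$, invoke Lemma~\ref{lemY} to bound these, and then extract a uniform bound on $f(\mathbf{L}_u^k)+g(\mathbf{S}_u^k)$ to conclude by coercivity. The only cosmetic differences are that the paper lower-bounds $\mathcal{L}_{\mu_k}$ by completing the square, whereas you use the exact identity $\mathcal{L}_{\mu_k}=f+g+\tfrac{1}{2\mu_k}(\|\mathbf{Y}^{k+1}\|_F^2-\|\mathbf{Y}^k\|_F^2)$, and the paper concludes via $\|\cdot\|_*$ and $\|\cdot\|_1$ directly rather than through the Frobenius coercivity you mention; your concern about the plateau regime $\mu_k=\mu_{\max}$ is legitimate, but the paper simply asserts $\sum_i\mu_i\|\mathbf{P}^i\|_F^2<\infty$ without addressing it either.
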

\begin{proof}
See Appendix \ref{appC}
\end{proof}

\begin{theorem}
Any accumulation point $(\mathbf{L}_u^*,\mathbf{S}_u^*,\mathbf{Y}^*)$ of
$\{(\mathbf{L}_u^k,\mathbf{S}_u^k,\mathbf{Y}^k)\}$ satisfies the KKT (Karush-Kuhn-Tucker) conditions:
\begin{equation}
\left\{
\begin{aligned}
& \mathbf{L}_u^*+\mathbf{S}_u^*=\widehat{\mathbf{H}}_u,\\
& \mathbf{Y}^*\in \partial f(\mathbf{L}_u^*),\\
&\mathbf{Y}^*\in \partial g(\mathbf{S}_u^*),
\end{aligned}
\right.
\end{equation}
and then is a stationary point of (\ref{eqARPCA}).
\end{theorem}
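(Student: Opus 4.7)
My plan is to establish the KKT conditions by extracting a convergent subsequence via Bolzano--Weierstrass from the compactness provided by Lemmas \ref{lemY} and \ref{lemLS}, then passing to the limit separately in the multiplier update (\ref{yk_1}) and in the two subdifferential inclusions (\ref{equY}). Since $f$ and $g$ are proper, convex, and lower semi-continuous---$f$ as the sum of the nuclear norm and a smooth quadratic, $g$ as the $\ell_1$-norm---their subdifferentials are maximal monotone operators with closed graphs, which is the structural tool that turns pointwise limits of iterates into the required inclusions. Convexity of (\ref{eqARPCA}) then upgrades any such KKT point to a global minimizer, i.e., a stationary point.

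Concretely, I would first fix an accumulation point $(\mathbf{L}_u^*,\mathbf{S}_u^*,\mathbf{Y}^*)$ and a subsequence $(k_j)$ along which $(\mathbf{L}_u^{k_j},\mathbf{S}_u^{k_j},\mathbf{Y}^{k_j}) \to (\mathbf{L}_u^*,\mathbf{S}_u^*,\mathbf{Y}^*)$; after a further extraction I may also assume $\mu_{k_j} \to \mu^\infty \in [\mu_0,\mu_{\max}]$. Primal feasibility $\mathbf{L}_u^*+\mathbf{S}_u^* = \widehat{\mathbf{H}}_u$ would come from rewriting (\ref{yk_1}) as
\begin{equation}
\widehat{\mathbf{H}}_u - \mathbf{L}_u^{k+1} - \mathbf{S}_u^{k+1} = \tfrac{1}{\mu_k}\bigl(\mathbf{Y}^{k+1}-\mathbf{Y}^k\bigr),
\end{equation}
so the task reduces to the \emph{asymptotic regularity} $\mathbf{Y}^{k+1}-\mathbf{Y}^k \to 0$. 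For convex $f,g$ this follows from a Fej\'er/Lyapunov analysis of a function of the form $V_k = \mu_k^{-1}\|\mathbf{Y}^k-\mathbf{Y}^\star\|_F^2 + \mu_k\|\mathbf{S}_u^k-\mathbf{S}_u^\star\|_F^2$ evaluated at any fixed KKT point $(\mathbf{L}_u^\star,\mathbf{S}_u^\star,\mathbf{Y}^\star)$, whose existence is guaranteed by convexity of (\ref{eqARPCA}) together with the trivially feasible linear equality constraint. The monotone decrease of $V_k$ delivers both $\|\mathbf{Y}^{k+1}-\mathbf{Y}^k\|_F \to 0$ and, simultaneously, $\|\mathbf{S}_u^{k+1}-\mathbf{S}_u^k\|_F \to 0$, a fact I will need next.

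With regularity in hand, I would pass to the limit in the two inclusions of (\ref{equY}), namely
\begin{equation}
\mathbf{Y}^{k_j+1} - \mu_{k_j}\bigl(\mathbf{S}_u^{k_j+1}-\mathbf{S}_u^{k_j}\bigr) \in \partial f(\mathbf{L}_u^{k_j+1}),
\end{equation}
\begin{equation}
\mathbf{Y}^{k_j+1} \in \partial g(\mathbf{S}_u^{k_j+1}).
\end{equation}
The correction term $\mu_{k_j}(\mathbf{S}_u^{k_j+1}-\mathbf{S}_u^{k_j})$ vanishes by the asymptotic regularity of $\{\mathbf{S}_u^k\}$ and the boundedness of $\mu_{k_j}$, and a further subsequence makes $\mathbf{Y}^{k_j+1}$ converge to the same limit $\mathbf{Y}^*$ as $\mathbf{Y}^{k_j}$. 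Closedness of the graphs of $\partial f$ and $\partial g$ then yields $\mathbf{Y}^* \in \partial f(\mathbf{L}_u^*)$ and $\mathbf{Y}^* \in \partial g(\mathbf{S}_u^*)$, which combined with primal feasibility is exactly the KKT system for (\ref{eqARPCA}); convexity upgrades this to stationarity.

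The main obstacle is the asymptotic regularity step. Had $\mu_k$ been allowed to grow without bound, the residual would vanish immediately from the boundedness of $\{\mathbf{Y}^k\}$ given by Lemma \ref{lemY}; but the cap $\mu_k \le \mu_{\max}$ in Algorithm \ref{algo1} forces one to build the Lyapunov function $V_k$ explicitly and verify that its per-step decrement is summable, which is the only nonroutine piece. Everything else---the Bolzano--Weierstrass extraction guaranteed by Lemmas \ref{lemY} and \ref{lemLS}, the passage to the limit through closed-graph subdifferentials, and the equivalence between KKT and stationarity under convexity---is standard once that Lyapunov decrease is established.
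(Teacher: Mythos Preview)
Your proposal is correct and follows the same skeleton as the paper: boundedness from Lemmas~\ref{lemY} and~\ref{lemLS} gives an accumulation point, vanishing of the primal residual delivers feasibility, and passing to the limit in the inclusions (\ref{equY})---using that $\mu_k(\mathbf{S}_u^{k+1}-\mathbf{S}_u^k)\to 0$ and the closed-graph property of convex subdifferentials---yields the two subdifferential conditions. The one substantive difference is how the asymptotic regularity is obtained. The paper does not build your Fej\'er/Lyapunov function $V_k$ anchored at a KKT point; instead it invokes the summability $\sum_k \mu_k\|\widehat{\mathbf{H}}_u-\mathbf{L}_u^k-\mathbf{S}_u^k\|_F^2<\infty$ extracted from the augmented-Lagrangian descent in the proof of Lemma~\ref{lemLS}, from which $\mathbf{P}^k\to 0$ and $\mu_k(\mathbf{S}_u^{k+1}-\mathbf{S}_u^k)\to 0$ are asserted directly. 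Your route is more self-contained and simultaneously gives $\|\mathbf{Y}^{k+1}-\mathbf{Y}^k\|\to 0$ and $\|\mathbf{S}_u^{k+1}-\mathbf{S}_u^k\|\to 0$ from a single monotone quantity, whereas the paper's argument leans on the Lagrangian-based summability established upstream; both lead to the same limit passage.
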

\begin{proof}
    From lemmas \ref{lemY} and \ref{lemLS}, the sequence $\{(\mathbf{L}_u^k, \mathbf{S}_u^k, \mathbf{Y}^k)\}$ is bounded, this ensures existence a accumulation point $(\mathbf{L}_u^*,\mathbf{S}_u^*,\mathbf{Y}^*)$. 
 Also $
    \sum_{k=0}^{\infty}\mu_k\|\mathbf{r}^{k+1}\|_F^2 < \infty \implies 
\mathbf{P}^k \to 0 $, then $\mathbf{L}_u^*+\mathbf{S}_u^*=\widehat{\mathbf{H}}_u$.

From equation (\ref{equY}),
\(
\mathbf{Y}^{k+1}\in \partial g(\mathbf{S}_u^{k+1}) \) and 
\(\mathbf{Y}^{k+1}\in \partial f(\mathbf{L}_u^{k+1}) + \mu_k(\mathbf{S}_u^{k+1}-\mathbf{S}_u^k)
\). Since $\mu_k(\mathbf{S}_u^{k+1}-\mathbf{S}_u^k)\to 0$, passing to the limit yields $\mathbf{Y}^*\in \partial g(\mathbf{S}_u^*)$ and $\mathbf{Y}^*\in \partial f(\mathbf{L}_u^*)$. The KKT conditions are satisfied. Therefore $(\mathbf{L}_u^*,\mathbf{S}_u^*,\mathbf{Y}^*)$ is a stationary point of (\ref{eqARPCA}).
\end{proof}


\section{CSI-PLA Using Information Reconciliation} 
\label{auth_setup}

After the CSI preprocessing, the preprocessed data are fed into the input of the reconciliation-based PLA scheme. Thus, $\mathbf{L}_1(t)$ and $\mathbf{L}_u(t+1)$ are reshaped and then quantized to obtain $\mathbf{q}_1(t)$ and $\mathbf{q_u}(t+1)$ $\in \{0,1\}^{1 \times N}$, where  $\mathbf{q}_1(t)$ and $\mathbf{q_u}(t+1)$ correspond to one sample, and $N$ is the codelength. We use the Lloyd-Max quantizer \cite{lloyd1982, max1960}, which is effective to design optimal quantizers. We also consider Gray coding to map the quantizer outputs into bits.

\subsection{Information Reconciliation}

The vectors $\mathbf{q}_1(t)$ and $\mathbf{q_u}(t+1)$ are mapped to the decoder outputs $\mathbf{r}_1(t)$ and $\mathbf{r}_u(t+1)$. The information reconciliation scheme uses a decoding process based on the principle of Slepian-Wolf decoding as in \cite{ArikanPolar}, where $\mathbf{q}_a(t+1)$ is decoded using the side information (helper data or syndrome) $S$ derived from the code design and the quantized vector $\mathbf{q}_1(t)$ from the enrollment phase at time $t$. In this work, we adopted polar codes \cite{ArikanPolar} as the error-correcting codes. Therefore, the side information contains the message and the frozen bit positions, and the $\mathrm{CRC}$ bits that are used for the decoding of $\mathbf{q}_u(t)$. 

This approach ensures that the correlation between the CSI across the training and the authentication phases is appropriately exploited in order to distinguish different users.
The reconciliation is expected to yield (i) identical or almost equal outputs in the normal scenario and (ii) distinctly different outputs in the alternative scenario. Specifically, since the correlation between the CSI at time $t$ and $t+1$ is much lower in the alternative situation, the decoder using the side information, fails to produce accurate outcomes. In contrast, the decoder produces accurate outputs in normal situations. The polar code construction and decoding are described in the following subsection.

\subsection{Polar Codes} 

In this subsection, we detail the construction and decoding of the polar codes adopted in the proposed PLA scheme.

\subsubsection{Code construction}

We decided to construct the polar codes using Gaussian approximation (GA) \cite{trifonov2012,pga}, which assumes that log-likelihood ratios (LLRs) remain Gaussian distributed through the polar transformation.  This is a more accurate construction rather than an upper bound as in \cite{arikan2009} where the construction uses the binary erasure channel (BEC) as a surrogate for the design channel, ranking polarized bit-channels via the Bhattacharyya recursion. The idea of GA is to evolve the densities and estimate the precise reliability of each channel. For AWGN with variance $\sigma^2$, the channel LLR is modeled as
\(
\text{LLR} \sim \mathcal{N}(\mu,2\mu)\) and \(\mu_0 = \frac{2}{\sigma^2},  
\)
where $\mu_0$ is the initial LLR mean. 
GA propagates only the mean LLR $\mu$ through the polarization process, yielding $\mu_i$ for each synthesized bit-channel $W^{(i)}$.
Let $\phi(\cdot)$ denote the GA check-node function~\cite{trifonov2012}, with an accurate empirical fit
\begin{equation}
\phi(\mu)\approx 
\begin{cases}
\exp(-0.4527\,\mu^{0.86}+0.0218), & 0\leq \mu \leq 10, \\
\sqrt{\tfrac{\pi}{\mu}}\left(1-\tfrac{10}{\mu}\right)\exp(-\mu/4), & \mu > 10.
\end{cases}
\end{equation}
The polarized outputs are given by 
\begin{equation}
\mu^{-} = \phi^{-1}\!\big(1-(1-\phi(\mu))^2\big),\qquad \mu^{+}=2\mu,
\end{equation}
where $\mu^{-}$ is the GA-estimated reliability of the bad channel and $\mu^{+}$ is the GA-estimated reliability of the good channel. Then the reliabilities of the polar code are found by sorting the $\mu_i$ values. The larger $\mu_i$ correspond to the more reliable bit-channels. In this work, we sort them from the least reliable to most reliable.


\subsubsection{Decoding}

In this part, $\mathbf{q}_u(t+1)$ is decoded using the side information $S$ from time $t$, which contains the frozen bits and the $\mathrm{CRC}$ bits. To enhance the performance of the polar codes in finite blocklengths, a $\mathrm{CRC}$-aided successive cancellation list decoding (SCL) \cite{Shakiba2},\cite{Tal} is applied, where the list size is denoted by $\ell_s$. 
During the decoding process, the $\mathrm{CRC}$-aided successive cancellation list decoder maintains $\ell_s$ decoding paths simultaneously. At each decoding stage, the decoder extends all active paths by considering both possible bit decisions (0 and 1) and the path metrics are updated accordingly to reflect their likelihoods. To manage complexity, only the $\ell_s$ most probable paths are retained, while the others are discarded. 
The inclusion of the $\mathrm{CRC}$ enhances the decoder's ability to distinguish between valid and invalid codewords, thereby significantly improving the reliability of the decoding process.

Now let us give a brief description of the decoding process using SCL with CRC by assuming that the polar code has length $N$. Considering the dimension (message length) $K$, the number of frozen bits (bits under pure-noise) is $N-K$.
Let us denote the input bits and the received bits as \( u_1^N = (u_1, \ldots, u_N) \) and \( y_1^N = (y_1, \ldots, y_N) \), respectively. The received bits here correspond to our quantized vectors of the CSI and the inputs that will be estimated with the decoding correspond to the reconciled vectors. In successive cancellation (SL) decoding, the information bits are decoded sequentially where each bit is estimated using maximum-likelihood (ML) decoding. The information bits $i$ are then decoded by calculating the likelihood ratio as follows \cite{arikan2009}:
\begin{equation}
L_N^{(i)}(y_1^n, \hat{u}_1^{i-1}) = \frac{W_N^{(i)}(y_1^N, \hat{u}_1^{i-1}|0)}{W_N^{(i)}(y_1^N, \hat{u}_1^{i-1}|1)},
\end{equation}
where $W_N^{(i)}(\cdot|\cdot)$ is the probability density. The decision is given by
\begin{equation}
\hat{u}_i =
\begin{cases}
0, & \text{if } L_N^{(i)}(y_1^N, \hat{u}_1^{i-1}) \ge 1 \\
1, & \text{otherwise.}
\end{cases}
\label{deci}
\end{equation}
In SCL decoding, instead of deciding to set the values to $0$ or $1$, the decoding path is split into two paths and the new paths are also split into two paths and so on until the maximum number of paths allowed is reached, which corresponds to the list size $\ell_s$. During this process, the most likely paths are kept. 

In the CRC-aided SCL decoding method, a part of the $K$ information bit are used to calculate the $m$ CRC bits. Thus, the first $K-m$ unfrozen bits become the information bits and the $m$ remaining bits are CRC bits. This helps at the final stage of the decoding to choose the transmitted bit by first removing the paths that have incorrect CRC bits. Among the surviving paths, the codeword that satisfies the $\mathrm{CRC}$ constraint and has the highest likelihood is selected as the final decoded output.

\subsubsection{Performance metrics} 
After the reconciliation, hypothesis testing is applied to differentiate the users. Two hypothesis $H_0$ and $H_1$ corresponding respectively to the normal case where the targeted user is transmitting and the alternative case when the user is not transmitting, are defined as follows:
\begin{equation}
    \centering
    \left\{ \begin{aligned}
H_0: \hspace{0.2cm}&\eta=\mathcal{H}_d\left(\mathbf{r}_1(t), \mathbf{r}_1(t+1)\right) \leq \eta_{t h} \\
H_1: \hspace{0.2cm} &\eta=\mathcal{H}_d\left(\mathbf{r}_1(t), \mathbf{r}_2(t+1)\right)>\eta_{t h}
    \end{aligned}
\right.
\label{eq}
\end{equation}
A bitwise comparison between $\mathbf{r}_1(t)$ and $\mathbf{r}_u(t+1)$, $u \in \{1, \, 2\}$, is a suitable choice for $\eta$. Thus, $\mathcal{H}_d(\cdot)$ is the Hamming distance between $\mathbf{r}_1(t)$ and $\mathbf{r}_u(t+1)$.
\begin{equation}
\eta = \mathcal{H}_d\left(\mathbf{r}_1(t), \mathbf{r}_u(t+1)\right) = \sum_{j=1}^K\left|r_{1,j}(t)-r_{u,j}(t+1)\right|
\label{eta}
\end{equation}
We can then evaluate the performance with the bit error probability, the receiver operation characteristic (ROC) curves and the detection and miss-detection probabilities.

\section{Numerical Results} 
\label{results}
In this section, we assess the proposed PLA scheme and A-RPCA algorithm via simulations using i) synthetic data in subsection \ref{res_A} and ii)  a real massive multiple input multiple output (mMIMO) outdoor dataset at FR1 (2.18 GHz) collected during a measurement campaign from Nokia in subsection \ref{res_B}. In each  subsection, we present numerical results for both CSI prepropcessing and CSI-PLA using reconciliation. We compare our proposed work with the CSI without preprocessing denoted by NoPre and previous schemes based on PCA, RPCA, AE and ReProCS. For the AE, table \ref{table_AE} gives the detailed structure of neural network.

\begin{table}[!t]
\centering
\caption{Network structure of the AE}
\renewcommand{\arraystretch}{1.2}
\begin{tabular}{|c|c|c|}
\hline
Layer & Dimensions & Activation \\
\hline
Input & $N_b$  & linear \\
\hline
1 & 200 & tanh \\
\hline
2 &  100 & softplus \\
\hline
3 & 50 & tanh  \\
\hline
4 (intermediate) & $D_{AE} = 32$ &  linear \\
\hline
5 & 50 &  relu\\
\hline
6 & 100 & softplus \\
\hline
7 & 200 & tanh \\
\hline
Output & $N_b$ & sigmoid \\
\hline
\end{tabular}
\label{table_AE}
\end{table}

\subsection{Synthetic Data} \label{res_A}
{
In this part, we present numerical results for synthetic data for the case where the channel follows a Rician distribution as mentioned in section \ref{sys_model}. 
Unless otherwise specified, the simulation parameters are defined as follows: $N_b = 32$, $\beta = 0.9$, the Code rate = $0.1$ and the signal-to-noise ratio (SNR) is set to $10\;dB$.

\subsubsection{CSI Preprocessing} \label{sb_prepro_synth}
The performance of the CSI preprocessing in evaluated by computing the Pearson correlation coefficient and the bit mismatch rate (BMR).

Fig. \ref{heat_synth} present the heatmap of the correlation between the estimated CSI at time $t$ and $t+1$ under both hypothesis $H_0$ (yellow square) and $H_1$ (red square) for A-RPCA in Fig. \ref{heat_arpca} and ReProCS in Fig. \ref{heat_reprocs}. This result show how the proposed A-RPCA can impact the correlation, as we have the value almost $1$ compare to RePoCS which is around $0.7$.

In Fig. \ref{bmr}, we show the bit mismatch rate (after quantization) as a function of the Rician $K$-factor $K = K_1 = K_2$. This allows assessment of the impact of the preprocessing when the alternative user Alice $2$ experiences the same LoS as Alice $1$. We can see that the proposed A-RPCA preprocessing performs much better than A-RPCA and the CSI without preprocessing, where $K = 0$ corresponds to a NLoS scenario. When $K$ increases, even the BMR between Alice $1$ (time $t$) and Alice $2$ (time $t+1$) is very low and almost equal to that of Alice $1$ (time $t$) and Alice $2$ (time $t+1$). Consequently, the improvement is obvious $K_1 = K_2 = 0$.

{
The BMR is also shown in Table \ref{tab_bmr_sy} for different SNR values for the case in which $K_1 = K_2 = 0$. Under hypothesis $H_1$, all the values are around $0.5$ for all the preprocessing methods. In contrast, under $H_0$, A-RPCA significantly outperforms the other schemes. For example, for SNR $ = \SI{10}{dB}$, the BMR drops to $0.08$ for A-RPCA while the other schemes obtain a performance close to NoPre where BMR $= 0.19$.
}

\begin{figure}[!t]
  \centering
  \subfloat[\footnotesize A-RPCA \label{heat_arpca}]{
    \includegraphics[width=0.6\linewidth]{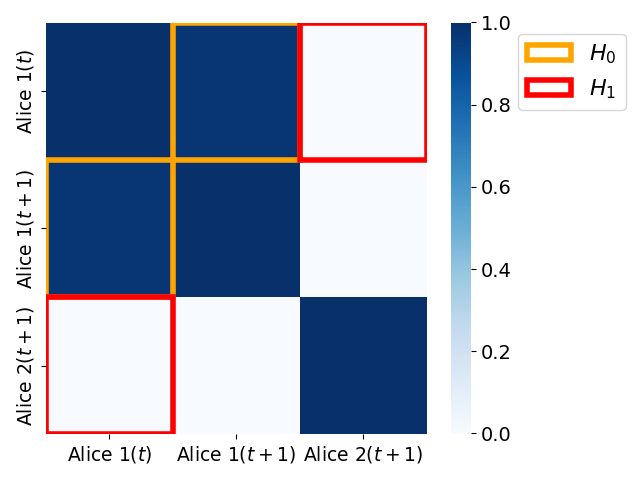} 
  } \hfill 
  \subfloat[\footnotesize ReProCS \label{heat_reprocs}]{
    \includegraphics[width=0.6\linewidth]{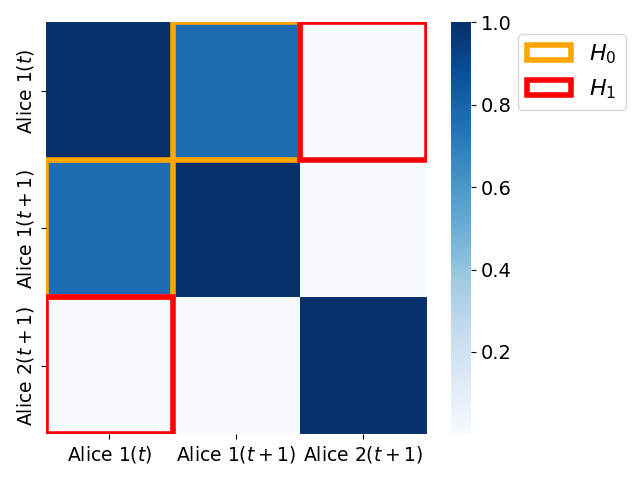}
    }
  \caption{Heatmap representing the correlation coefficient between the CSI after preprocessing at $t$ and $t+1$: the hypothesis $H_0$ is shown by the yellow and $H_1$ by the red square. The SNR $ = \SI{10}{\decibel}$.}
  \label{heat_synth}
\end{figure}
}

\begin{figure}[!t]
\centering
\includegraphics[width=0.4\textwidth, height=0.25\textwidth]{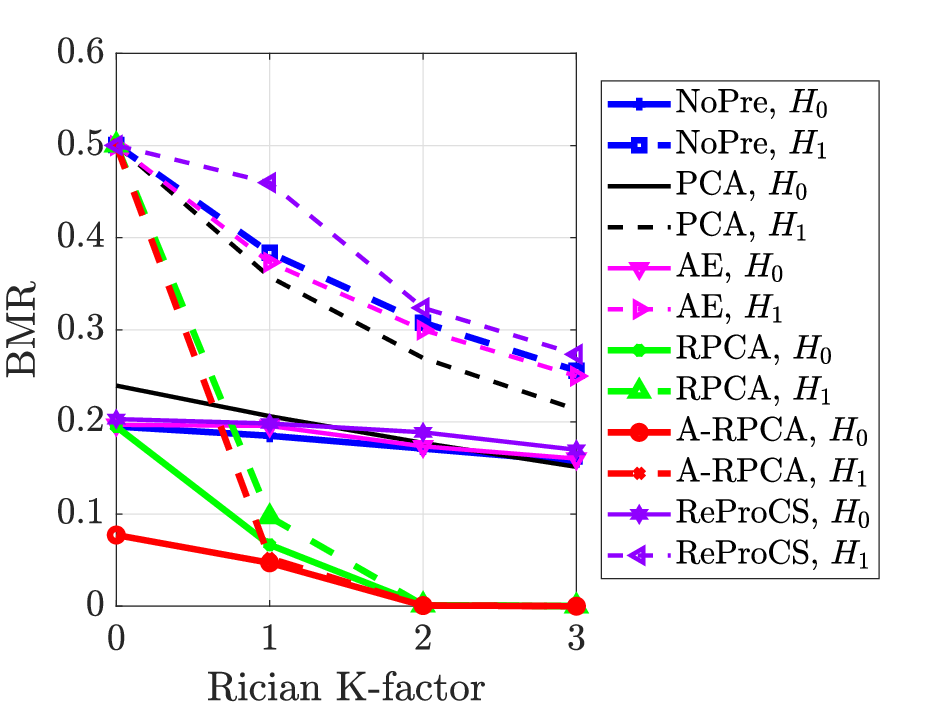} 
\caption{Bit mismatch rate: $\mathrm{SNR} = \SI{10}{\decibel}$, $N = 128$, $n = 1$ bit} 
\label{bmr}
\end{figure}

\begin{table*}[!t]
\centering
\caption{Bit mismatch rate: $K_1 = K_2 = 0$, $D_{AE} = 32$ (for AE), $D_{PCA} = 10$}
\renewcommand{\arraystretch}{1.2}
\begin{tabular}{|c|c|c|c|c|c|c|c|c|c|c|c|c|}
\hhline{~|-|-|-|-|-|-|-|-|-|-|-|-|}
\multicolumn{1}{c|}{} & \multicolumn{2}{c|}{NoPre} & \multicolumn{2}{c|}{PCA} & \multicolumn{2}{c|}{AE} & \multicolumn{2}{c|}{ReProCS} & \multicolumn{2}{c|}{RPCA} & \multicolumn{2}{c|}{A-RPCA} \\
\hline
SNR [$\SI{}{\decibel}$] & $H_0$ & $H_1$ & $H_0$ & $H_1$ & $H_0$ & $H_1$ & $H_0$ &$H_1$ & $H_0$ & $H_1$ & $H_0$ & $H_1$ \\
\hline
$5$ & $0.26$ & $0.50$ & $0.30$ & $0.50$ & $0.27$ & $0.50$ & $0.27$ & $0.50$& $0.26$ & $0.50$ &$0.09$ & $0.50$ \\
\hline
$10$ &  $0.19$ & $0.50$ & $0.24$ & $0.50$ & $0.20$ & $0.50$  & $0.20$& $0.50$ &$0.19$ & $0.50$ & $0.08$&$ 0.50$ \\
\hline
$15$ & $0.16$ & $0.50$ &  $0.21$ & $0.50$ & $0.17$ & $0.50 $ & $0.17$ &$0.50$ & $0.16$ & $0.50$ & $0.07$ &  $0.50$ \\
\hline
\end{tabular}
\label{tab_bmr_sy}
\end{table*}

\subsubsection{PLA Using Information Reconciliation}
In this part we assess the performance of PLA using information after preprocessing. Performance metrics like detection probability, false alarm probability and error probabilities after reconciliation are assessed. 

In Fig. \ref{roc_synth}, a receiver operating characteristic (ROC) curve is shown. The proposed A-RPCA algorithm has very high detection probability (almost $100 \%$) and this confirms the results obtained in Fig. \ref{bmr}. 

\begin{figure} 
\centering
\includegraphics[width=0.35\textwidth, height=0.25\textwidth]{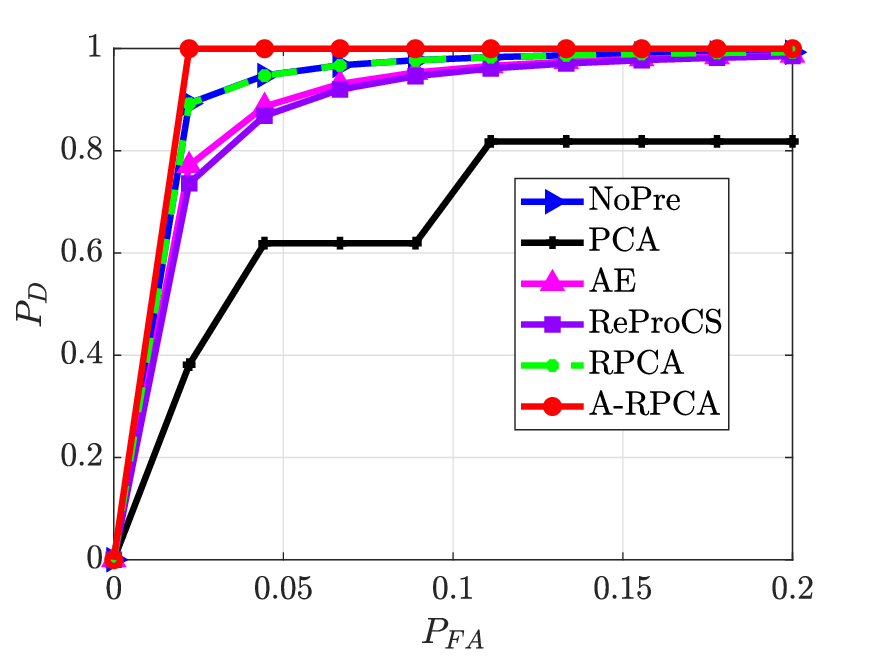}
\caption{$P_D$ vs $P_{FA}$: $K_1 = K_2 = 0$, $\mathrm{SNR} = \SI{10}{\decibel}$, Code rate = $0.1$, $N = 128$.} 
\label{roc_synth}
\end{figure}

An analysis of the code rate is carried out in Fig. \ref{error_prob} and this shows that the error probability is lower under the hypothesis $H_0$, as expected. In particular, A-RPCA has very good performance as compared to the other schemes when the code rate increases.
\begin{figure}[!t]
\centering
\includegraphics[width=0.35\textwidth, height=0.25\textwidth]{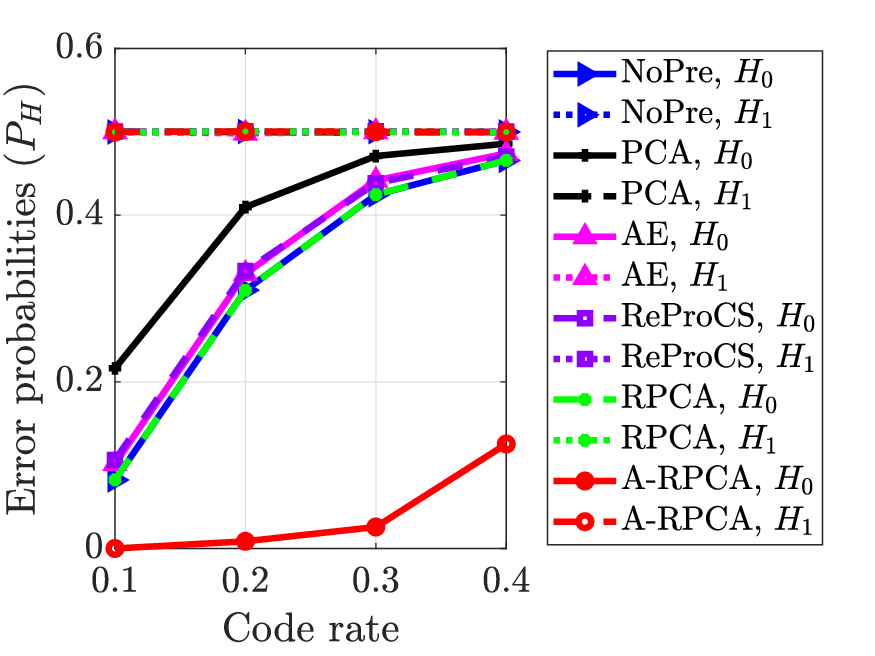} 
\caption{Error probability after reconciliation under $H_0$ and $H_1$: $\mathrm{SNR} = \SI{10}{\decibel}$.} 
\label{error_prob}
\end{figure}

\subsection{Results on Real Dataset} \label{res_B}
In this section, we present results on a real mMIMO CSI measurement campaign carried out on the Nokia campus in Stuttgart \cite{Shehzad}. Our purpose here is to distinguish a target track from the others, as shown in Fig. \ref{nokia_map}.
\begin{figure*}
\centering
\includegraphics[width=0.66\textwidth]
{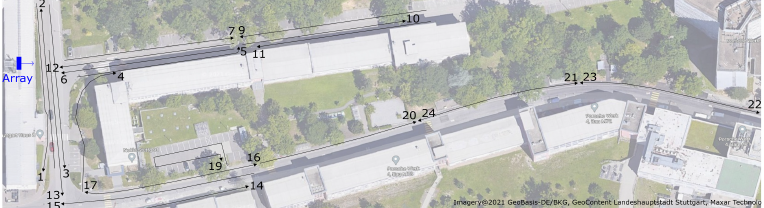}  
\caption{Map of the Nokia campus in Stuttgart, Germany. The transmitter array location is represented by the blue bar. The user equipment (UE) is moving along different tracks represented by the black lines and the direction is presented by the arrow.}   
\label{nokia_map}
\end{figure*}

The measurement environment consists of buildings approximately 15~m high, primarily arranged along the streets. These buildings acted as both reflectors and obstructions for the radio waves transmitted from the antenna array, which was mounted on the rooftop of one of the buildings. The site thus included both LoS and NLoS propagation.

The transmit array comprised 64 elements arranged in four rows of 16 single-polarized patch antennas, with a horizontal spacing of $\lambda/2$ and a vertical separation of $\lambda$. The array transmitted 64 orthogonal pilot signals across both time and frequency domains at a carrier frequency of 2.18~GHz using OFDM waveforms. The pilot structure was designed such that 50 separate subbands, each consisting of 12 consecutive subcarriers, were sounded within 0.5~ms, during which the channel was assumed to remain invariant in time. Pilot bursts were transmitted continuously with a periodicity of 0.5~ms.

The receiver setup emulating a user equipment (UE) consisted of a single monopole antenna placed at a height of 1.5~m, connected to a Rohde \& Schwarz TSMW receiver and a Rohde \& Schwarz IQR hard disk recorder, which continuously captured the received baseband signal. 
Both the transmitter and receiver were frequency-synchronized via GPS. Then, during measurements, a receiver cart was moved along multiple predefined routes, as shown in Fig. \ref{nokia_map}, at a walking speed of approximately 5~km/h, corresponding to a spatial sampling 
distance below 0.1~mm. In post-processing, a 64-dimensional channel vector was extracted for each pilot burst and subband. 

Fig. \ref{trackss} shows the track segmentation used in this work for the pairs of tracks we aim to differentiate (referred to Alice $1$ and Alice $2$). One segment represents the track of interest (Alice $1$) to be detected, where the green squares represent the user position at time $t$ and the orange squares correspond to the user locations at time $t+1$. The red squares represent the user location at $t+1$ on the alternative track (Alice $2$) from which we aim to distinguish the user of interest. Different communication scenarios are considered, where (Alice $1$)  and (Alice $2$) are on LoS or NLoS tracks. 
\begin{figure}
\centering
\includegraphics[width=0.5\textwidth]{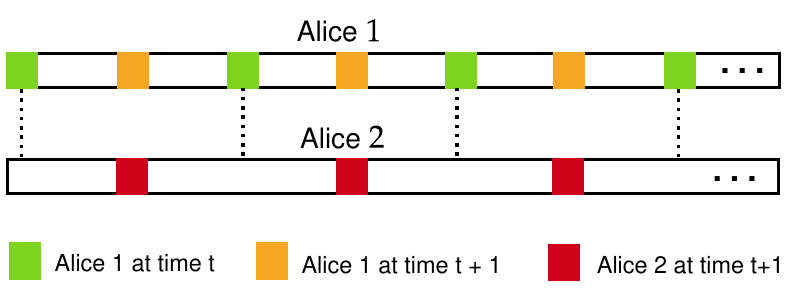}  
\caption{Tracks of Alice 1 and Alice 2 to be differentiated. Each Alice is on a single track because the data corresponds to a walking user. The green and orange colors represent respectively Alice $1$ locations at time $t$ and $t+1$. The red color is Alice $2$ location at time $t+1$.}   
\label{trackss}
\end{figure}

In the following, unless otherwise specified, the parameters are defined as follows: code rate $ = 0.1$, codelength $N = 128$, signal-to-noise ratio $\mathrm{SNR} = \SI{10}{\decibel}$, number of subtracks $N_s = 46$, $ n = 1$ bit quantizer, for tracks (t$6$, t$1$), that is, t$6$ for Alice $1$ and t$1$ for Alice $2$.

\subsubsection{CSI Preprocessing}  
As in section \ref{res_A}, the performance is assessed by showing the correlation heatmap and the BMR, respectively, in Fig. \ref{heat_nokia} and Fig. \ref{bmr_nokia} for SNR $= \SI{5}{\decibel}$. The correlation coefficient is almost equal to $1$ for A-RPCA while it is around $0.5$ for ReProCS. Concerning the BMR, the improvement with A-RPCA is also shown, in particular, under $H_0$. Under $H_1$, all the preprocessing schemes give a BMR around $0.5$, as well as NoPre.

\begin{figure}[!t]
  \centering
  \subfloat[\footnotesize A-RPCA \label{heat_arpca_nokia}]{
    \includegraphics[width=0.6\linewidth]{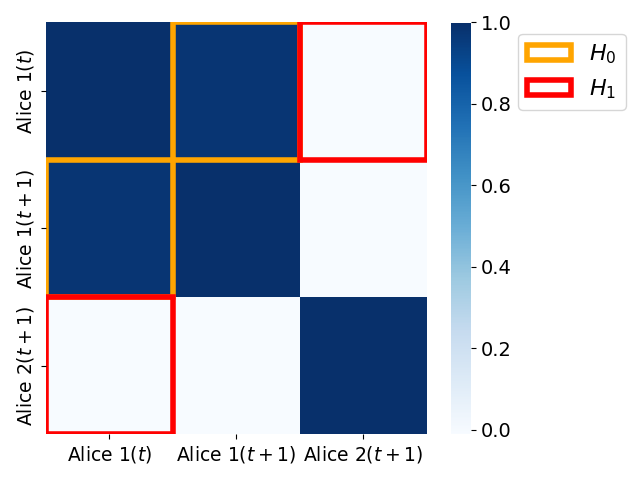} 
  } \hfill 
  \subfloat[\footnotesize ReProCS \label{heat_reprocs_nokia}]{
    \includegraphics[width=0.6\linewidth]{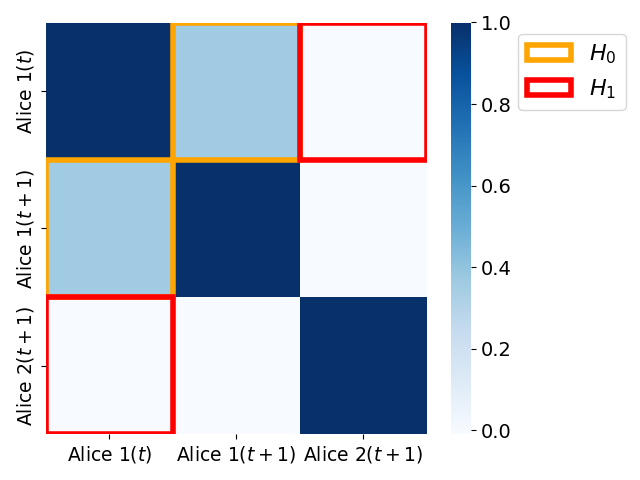}
    }
  \caption{Heatmap representing the correlation coefficient between the CSI after preprocessing at $t$ and $t+1$: the hypothesis $H_0$ is shown by the yellow and $H_1$ by the red square. The SNR $ = \SI{5}{\decibel}$.}
  \label{heat_nokia}
\end{figure}

\begin{figure}[!t]
\centering
\includegraphics[width=0.35\textwidth, height=0.25\textwidth]{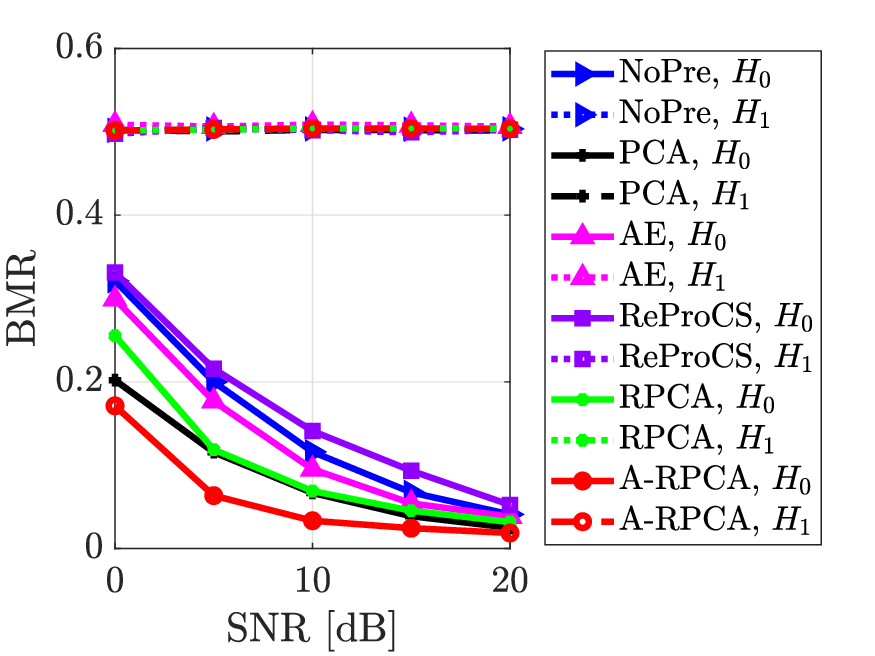} 
\caption{Bit mismatch ratio under $H_0$ and $H_1$: $\mathrm{SNR} = \SI{10}{\decibel}$.} 
\label{bmr_nokia}
\end{figure}


\begin{table}[!t]
\centering
\caption{$P_D$ vs Code rate: $\mathrm{SNR} = \SI{10}{\decibel}$, $P_{FA} = 0.05$}
\renewcommand{\arraystretch}{1.2} 
\begin{tabular}{|c|c|c|c|c|c|c|}
\hhline{|-|-|-|-|-|-|-|}
Rate & NoPre & PCA & AE & ReProCS & RPCA & A-RPCA\\
\hline
$0.1$ & $1$ & $1$ & $1$ & $1$ & $1$ & $1$  \\
\hline
$0.2$ & $0.94$ & $0.97$ & $0.99$ & $0.90$ & $1$ & $1$  \\
\hline
$0.3$ & $0.92$ & $0.97$ & $0.95$ & $0.86$ & $0.99$ & $1$  \\
\hline
$0.4$ & $0.88$ & $0.96$ & $0.87$ & $0.72$ & $0.98$ & $0.99$  \\
\hline
\end{tabular}
\label{tab_bmr_nokia}
\end{table}

\subsubsection{PLA Using Information Reconciliation}  
First, the error probability after the reconciliation is shown for all the schemes under $H_0$ and $H_1$ as a function of the code rate. As expected, the proposed scheme outperforms prior schemes under $H_0$, particularly after $0.3$. PCA and RPCA give also very good results for codes rates less than $0.3$ 
\begin{figure}[!t]
\centering
\includegraphics[width=0.35\textwidth, height=0.25\textwidth]{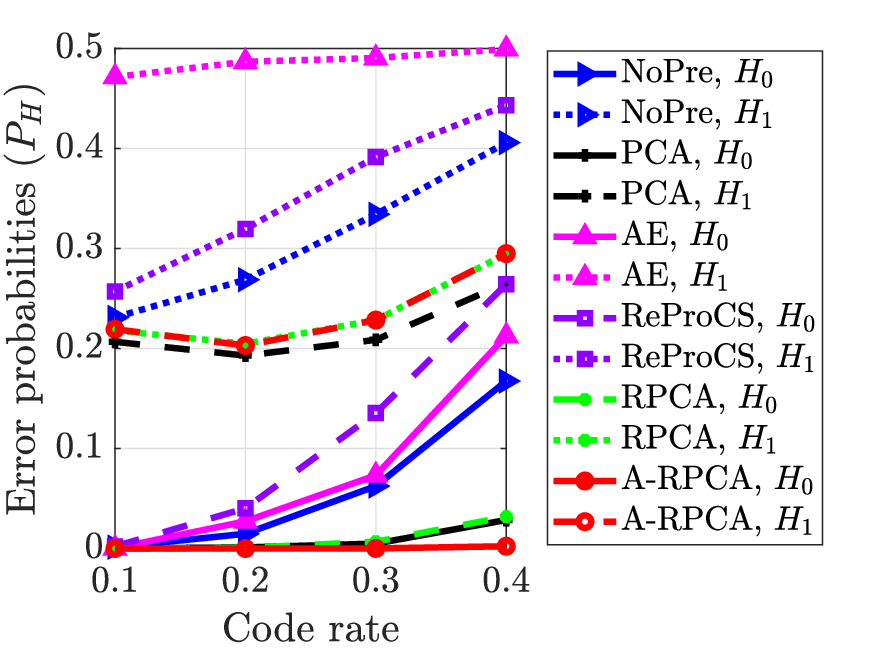} 
\caption{Error probability after reconciliation under $H_0$ and $H_1$: $\mathrm{SNR} = \SI{10}{\decibel}$.} 
\label{error_prob_nokia}
\end{figure}
 
The ROC curve is presented in Fig. \ref{roc61_02} for a code rate $= 0.2$ and SNR$ = \SI{10}{\decibel}$. This demonstrates the effectiveness of the reconciliation scheme in real scenarios. All cases, including the CSI without preprocessing, perform quite well, while A-RPCA has detection probabilities equal to $1$ even for false alarm ratios close to $0$. The detection probability is also evaluated for different values for the SNR in table \ref{tab_pd_snr_nokia}. RPCA and the proposed A-RPCA algorithm perfom very well for all SNR values but A-RPCA is more robust for SNR $= \SI{5}{\decibel}$. 

The ROC curve is also presented in Fig. \ref{roc61_02_2} where we compare the performance of the authentication scheme with and without reconciliation denoted by "no recon". For all preprocessing schemes presented in this figure, the reconciliation is performing very well compare to the authentication without reconciliation. This result confirms also the effectiveness of the CSI-PLA using information reconciliation using real dataset.

\begin{figure} [!t]
\centering
\includegraphics[width=0.35\textwidth, height=0.25\textwidth]{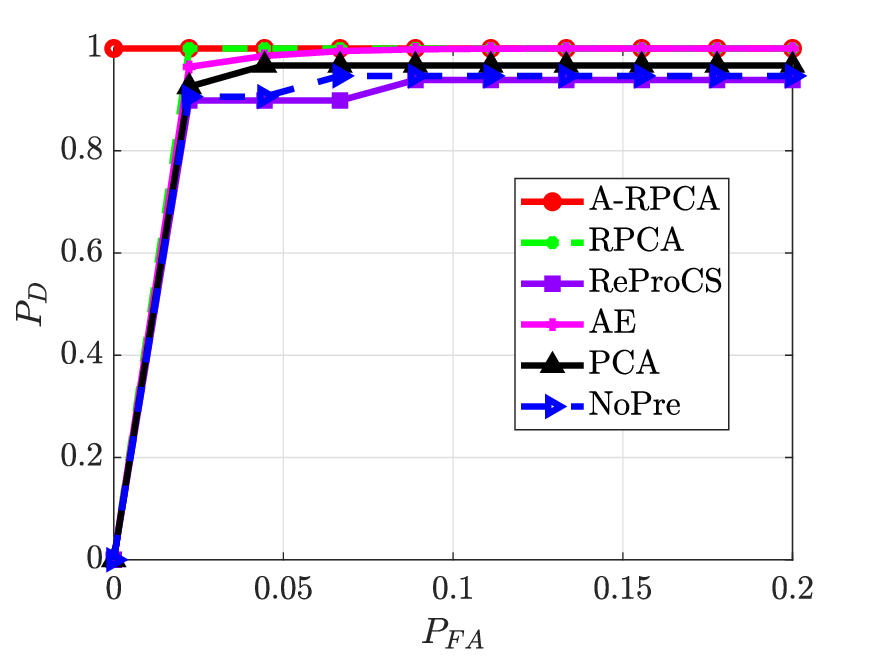}
 \caption{$P_D$ vs $P_{FA}$:  (LoS, NLoS) $= (6,1)$, $\mathrm{SNR} = \SI{10}{\decibel}$, $N = 128$, Code rate = $0.2$}   
 \label{roc61_02}
\end{figure}

\begin{figure} 
\centering
\includegraphics[width=0.35\textwidth, height=0.25\textwidth]{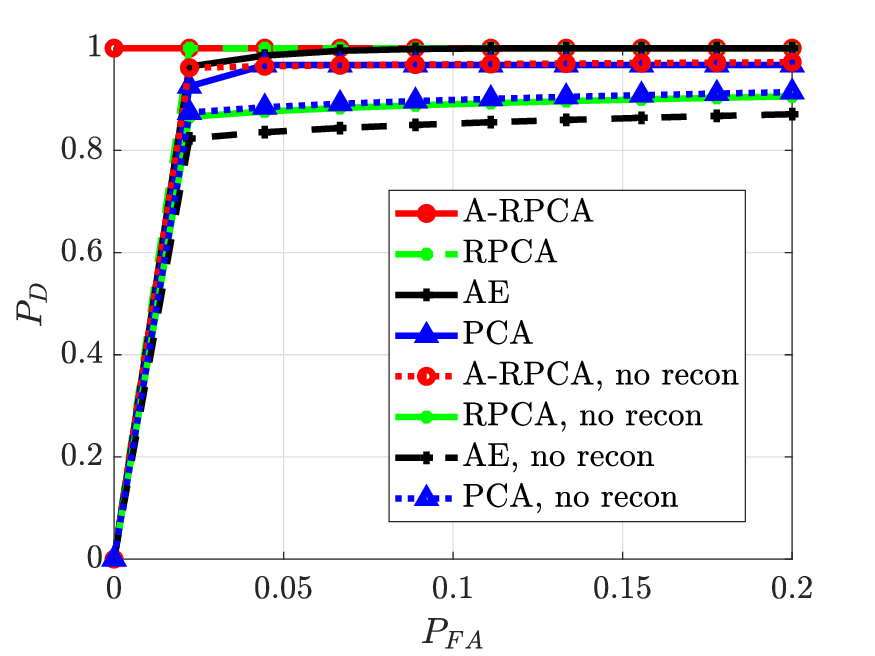}
 \caption{$P_D$ vs $P_{FA}$:  (LoS, NLoS) $= (6,1)$, $\mathrm{SNR} = \SI{10}{\decibel}$, $N = 128$, Code rate = $0.2$}   
 \label{roc61_02_2}
\end{figure}

\begin{table}[!t]
\centering
\caption{$P_D$ vs SNR: Code rate $ = 0.2$, $P_{FA} = 0.05$}
\renewcommand{\arraystretch}{1.2} 
\begin{tabular}{|c|c|c|c|c|c|c|}
\hhline{|-|-|-|-|-|-|-|}
SNR [dB] & NoPre & PCA & AE & ReProCS & RPCA & A-RPCA\\
\hline
$5$ & $0.36$ & $0.96$ & $0.36$ & $0.21$ & $0.97$ & $0.99$  \\
\hline
$10$ & $0.95$ & $0.97$ & $0.99$ & $0.90$ & $1$ & $1$  \\
\hline
$15$ & $0.99$ & $0.99$ & $1$ & $0.99$ & $1$ & $1$  \\
\hline
\end{tabular}
\label{tab_pd_snr_nokia}
\end{table}


The miss-detection and false alarm probabilities are evaluated as a function of the hypothesis testing threshold in  
Fig. \ref{proba6_1_3} where the code rate is equal 
to $0.3$. The figure confirms our previous results with ROC curves. The equal error rate (EER), which corresponds to the point where the false alarm and miss-detection probability cross, is lower (almost $0$) for A-RPCA as compared to NoPre and the other schemes. This confirms the higher accuracy obtained by A-RPCA in Fig. \ref{roc61_02}. 


\begin{figure}[!t]
\centering
\includegraphics[width=0.35\textwidth, height=0.25\textwidth]{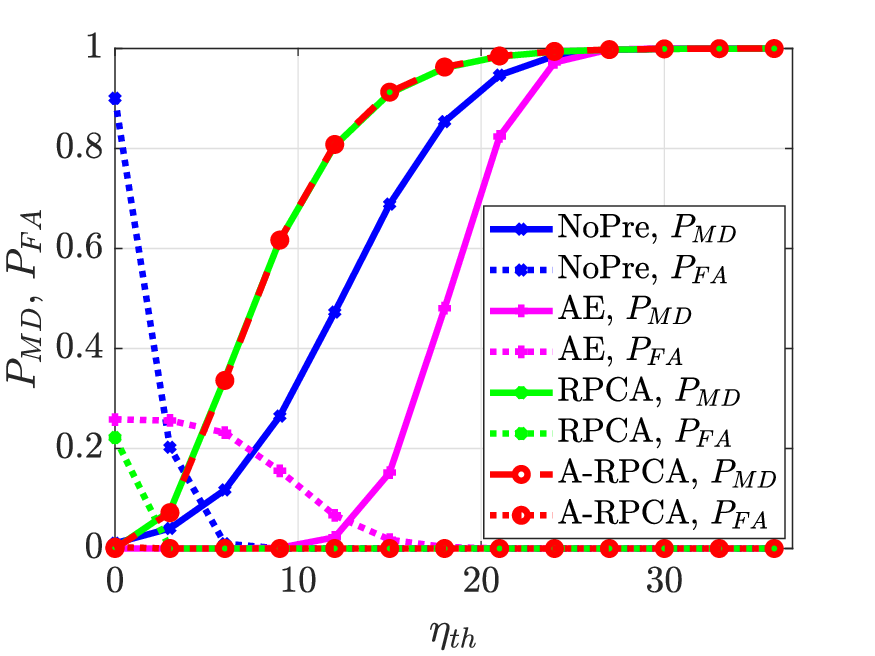}
\caption{$P_D$ and $P_{FA}$ vs $\eta_{th}$:  (LoS, NLoS) $= (6,1)$, $\mathrm{SNR} = \SI{10}{\decibel}$, Code rate = $0.3$, $N = 128$, $n = 1$ }   
\label{proba6_1_3}
\end{figure}



The performance is also assessed for different scenarios in Fig. \ref{r_tr}. Four cases are considered for the tracks of the pairs (Alice $1$, Alice $2$) as follows: (LoS, LoS) = (t$6$, t$11$), (LoS, NLoS) = (t$6$, t$1$), (NLoS, LoS) = (t$1$, t$6$) and (NLoS, NLoS) = (t$1$, t$13$). For a code rate equal to $0.2$, SNR $= \SI{5}{\decibel}$ and a false alarm probability $P_{FA} = 0.05$, A-RPCA performs very well with probability almost equal to $100 \%$ for all communication environments of Alice $1$ and Alice $2$.
\begin{figure}[!t]
\centering
\includegraphics[width=0.42\textwidth, height=0.3\textwidth]{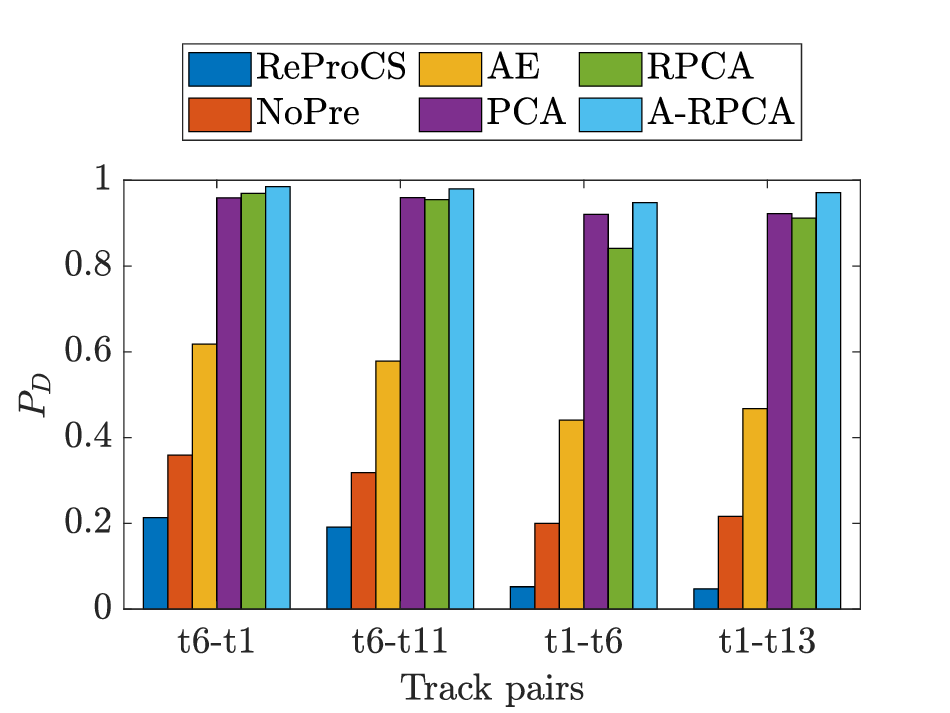}  \caption{Detection probability for different tracks: $P_{FA} = 0.
05$, $\mathrm{SNR} = \SI{5}{\decibel}$, Code rate = $0.2$, $N = 128$}   
\label{r_tr}
\end{figure}

\section{Conclusion} \label{conclude}
{
In this work, we have presented an adaptive CSI preprocessing scheme to enhance the accuracy of CSI-PLA and also evaluated the performance improvement when employing information reconciliation.  The proposed A-RPCA preprocessing technique was obtained by solving a TR-PCP optimization problem. Extensive sets of numerical results were presented for both synthetic data and real outdoor datasets. For both datasets, the preprocesing has been first assessed followed by CSI-PLA using information reconciliation. The  Numerical results showed that A-RPCA substantially improved the 
Pearson correlation coefficient and the detection probability for very small false alarm probabilities and outperformed accross all PLA metrics other state-of-the-art approaches, notably using PCA, RPCA, AE and ReProCS.
}

\section*{Acknowledgments}
A. K. A. Passah has been supported by SRV ENSEA, CNPq and the COST Action CA22168 6G-PHYSEC. A. Chorti has been partially supported by the EC through the Horizon Europe/JU SNS project ROBUST-6G (Grant Agreement no. 101139068), the IPAL Project CONNECTING, the TalCyb Chair in Cybersecurity and by the French government under the France 2030 ANR program “PEPR Networks of the Future” (ref. ANR-22-PEFT-0009). R. C. de Lamare has been supported by PUC-Rio, CNPq, FAPERJ and FAPESP.

\appendices
\section{Proof of Proposition \ref{l_updtae}}\label{appA}
\begin{proof}
Equation (\ref{eq_L}) can be written as   
\begin{equation}
\begin{aligned}
\mathbf{L}_u^{k+1}
& = \;  \arg\min_{\mathbf{L}_u}
\;\|\mathbf{L}_u\|_{*}
\;+\;\gamma_u\,\bigl\|\,\mathbf{L}_u - \beta_u\,\mathbf{L}_1\bigr\|_{F}^{2}
\;+\; \\
&\bigl\langle \mathbf{Y}^{k},\,\widehat{\mathbf{H}}_u - \mathbf{L}_u - \mathbf{S}_u^{k}\bigr\rangle
\;+\;\frac{\mu_{k}}{2}\,\bigl\|\,\widehat{\mathbf{H}}_u - \mathbf{L}_u - \mathbf{S}_u^{k}\bigr\|_{F}^{2}.
\end{aligned}
\end{equation}
Let us define
\(
\mathbf{R}_{L}^{\,k} 
:= \widehat{\mathbf{H}}_u - \mathbf{S}_u^{k}.
\) Then \(\widehat{\mathbf{H}}_u - \mathbf{L}_u - \mathbf{S}_u^{k} = \mathbf{R}_{L}^{\,k} - \mathbf{L}_u\). 
Then
\(
\bigl\langle \mathbf{Y}^{k},\,\mathbf{R}_{L}^{\,k} - \mathbf{L}_u\bigr\rangle
= \bigl\langle \mathbf{Y}^{k},\,\mathbf{R}_{L}^{\,k}\bigr\rangle 
  - \bigl\langle \mathbf{Y}^{k},\,\mathbf{L}_u\bigr\rangle,
\)
and\\
\(
\|\mathbf{R}_{L}^{\,k} - \mathbf{L}_u\|_{F}^{2}
= \|\mathbf{R}_{L}^{\,k}\|_{F}^{2}
  - 2\,\bigl\langle \mathbf{R}_{L}^{\,k},\,\mathbf{L}_u\bigr\rangle 
  + \|\mathbf{L}_u\|_{F}^{2}.
\)
When dropping all terms independent of \(\mathbf{L}_u\), the \(\mathbf{L}_u\)-dependent part is
\begin{equation}
\hspace{-0.1cm}
-\,\bigl\langle \mathbf{Y}^{k},\,\mathbf{L}_u\bigr\rangle
- \mu_{k}\,\bigl\langle \mathbf{R}_{L}^{\,k},\,\mathbf{L}_u\bigr\rangle
+ \frac{\mu_{k}}{2}\,\|\mathbf{L}_u\|_{F}^{2}
+ \gamma_u\,\|\mathbf{L}_u - \beta_u\,\mathbf{L}_1\|_{F}^{2}.
\end{equation}
By combining the two linear terms in \(\mathbf{L}_u\) we get:\\
$
-\,\bigl\langle \mathbf{Y}^{k},\,\mathbf{L}_u\bigr\rangle
- \mu_{k}\,\bigl\langle \mathbf{R}_{L}^{\,k},\,\mathbf{L}_u\bigr\rangle
= -\,\bigl\langle \mathbf{Y}^{k} + \mu_{k}\,\mathbf{R}_{L}^{\,k},\,\mathbf{L}_u\bigr\rangle.$
Then,
\begin{equation}
\begin{aligned}
& -\,\bigl\langle \mathbf{Y}^{k} + \mu_{k}\,\mathbf{R}_{L}^{\,k},\,\mathbf{L}_u\bigr\rangle
+ \frac{\mu_{k}}{2}\,\|\mathbf{L}_u\|_{F}^{2}
= \\
&\frac{\mu_{k}}{2}\,
  \Bigl\|\mathbf{L}_u - \frac{\mathbf{Y}^{k} + \mu_{k}\,\mathbf{R}_{L}^{\,k}}{\mu_{k}}\Bigr\|_{F}^{2} - 
\frac{1}{2\mu_{k}}\,
  \bigl\|\mathbf{Y}^{k} + \mu_{k}\,\mathbf{R}_{L}^{\,k}\bigr\|_{F}^{2}.
\end{aligned}
\end{equation}
Now let us define
\( 
\mathbf{B}^{\,k} 
:=\; \mathbf{R}_{L}^{\,k} + \frac{1}{\mu_{k}}\,\mathbf{Y}^{k}
 = \widehat{\mathbf{H}}_u - \mathbf{S}_u^{k} + \frac{1}{\mu_{k}}\,\mathbf{Y}^{k}.
\) 
By dropping the constant, the \(\mathbf{L}_u\)-subproblem becomes
\begin{equation} \label{eq24}
\min_{\mathbf{L}_u}
\|\mathbf{L}_u\|_{*}
\;+\;\gamma_u\,\|\mathbf{L}_u - \beta_u\,\mathbf{L}_1\|_{F}^{2}
\;+\;\frac{\mu_{k}}{2}\,\|\mathbf{L}_u - \mathbf{B}^{\,k}\|_{F}^{2}.
\end{equation}
To derive the sum of the two squared-Frobenius terms in (\ref{eq24}), 
let us set \(\alpha_{1} = \mu_{k}\) and \(\alpha_{2} = 2\,\gamma_u\), so \(\alpha_{1} + \alpha_{2} = \mu_{k} + 2\gamma_u\).  

\begin{equation}
\begin{aligned}
 \gamma_u\|\mathbf{L}_u - \beta_u\,\mathbf{L}_1\|_{F}^{2}
= & \frac{\alpha_{2}}{2}\Bigl(\|\mathbf{L}_u\|_{F}^{2}
                     - 2\,\langle \beta_u\mathbf{L}_1,\,\mathbf{L}_u\rangle \\
                    & + \|\beta_u\,\mathbf{L}_1\|_{F}^{2}\Bigr)
\end{aligned}
\end{equation}
and 
\begin{equation}
 \frac{\mu_{k}}{2}\,\|\mathbf{L}_u - \mathbf{B}^{\,k}\|_{F}^{2}
= \frac{\alpha_{1}}{2}\,\Bigl(\|\mathbf{L}_u\|_{F}^{2}
                       - 2\,\langle \mathbf{B}^{\,k},\,\mathbf{L}_u\rangle
                       + \|\mathbf{B}^{\,k}\|_{F}^{2}\Bigr).
\end{equation}

 Adding these yields
\begin{equation}
\frac{\alpha_{1} + \alpha_{2}}{2}\|\mathbf{L}_u\|_{F}^{2}
- \Bigl(\alpha_{2}\langle \beta_u\,\mathbf{L}_1,\,\mathbf{L}_u\rangle + \alpha_{1}\langle \mathbf{B}^{\,k},\,\mathbf{L}_u\rangle\Bigr)
+ C_1 
\label{eq27}
\end{equation}
with $C_1 = \frac{\alpha_{2}}{2}\,\|\beta_u\,\mathbf{L}_1\|_{F}^{2} + \frac{\alpha_{1}}{2}\,\|\mathbf{B}^{\,k}\|_{F}^{2}$.

\noindent We drop the constant $C_1$ and define
\(
\mathbf{G} 
:= \alpha_{1}\,\mathbf{B}^{\,k} + \alpha_{2}\,\beta_u\,\mathbf{L}_1,
\)
so (\ref{eq27}) becomes
\(
\frac{\alpha_{1} + \alpha_{2}}{2}\,\|\mathbf{L}_u\|_{F}^{2}
- \langle \mathbf{G},\,\mathbf{L}_u\rangle.
\) 
\begin{equation}
\frac{\alpha_{1} + \alpha_{2}}{2}\|\mathbf{L}_u\|_{F}^{2}
- \langle \mathbf{G},\,\mathbf{L}_u\rangle
 = \frac{\alpha_{1} + \alpha_{2}}{2}\Bigl\|\mathbf{L}_u - \frac{\mathbf{G}}{\alpha_{1} + \alpha_{2}}\Bigr\|_{F}^{2} + C_2 
\end{equation}
where the constant $C_2 = -\frac{1}{2(\alpha_{1} + \alpha_{2})}\,\|\mathbf{G}\|_{F}^{2}$.
We drop the constant $C_2$ and define
$\mathbf{T}^{\,k}
:= \dfrac{\mathbf{G}}{\alpha_{1} + \alpha_{2}}
= \dfrac{\mu_{k}\,\mathbf{B}^{\,k} + 2\,\gamma_u\,\beta_u\,\mathbf{L}_1}{\mu_{k} + 2\gamma_u}$.

\noindent Hence
\(
\gamma_u\,\|\mathbf{L}_u - \beta_u\,\mathbf{L}_1\|_{F}^{2}
+ \frac{\mu_{k}}{2}\,\|\mathbf{L}_u - \mathbf{B}^{\,k}\|_{F}^{2}
= \frac{\alpha_{1} + \alpha_{2}}{2}\,\|\mathbf{L}_u - \mathbf{T}^{\,k}\|_{F}^{2}
+ C_{12},
\)
where $C_{12} = C_1 + C_2$. The \(\mathbf{L}_u\)-subproblem becomes 
\begin{equation} \label{equa_32}
\mathbf{L}_u^{k+1}
= \arg\min_{\mathbf{L}_u}
\;\|\mathbf{L}_u\|_{*}
\;+\;\frac{\mu_{k} + 2\gamma_u}{2}\,\|\mathbf{L}_u - \mathbf{T}^{\,k}\|_{F}^{2}.
\end{equation}

\noindent We rewrite it in proximal form of the nuclear norm
\(
\min_{L} \Bigl\{ \| \mathbf{L}_u \|_{*} \;+\; \frac{1}{2\tau}\,\| \mathbf{L}_u - \mathbf{T}^k \|_{F}^{2} \Bigr\},
\)
where
\(
\tau \;=\; \frac{1}{\,\mu_{k} + 2\gamma_u\,}.
\)
Hence
\begin{equation}
\mathbf{L}_u^{\,k+1}
\;=\;
\mathrm{prox}_{\tau\,\|\cdot\|_{*}}\bigl(\mathbf{T}^{k}\bigr)
\end{equation}
and the closed-form solution of (\ref{equa_32}) is given by
\begin{equation}
\mathbf{L}_u^{k+1} =  \mathcal{D}_{1/(\mu_k+2\gamma_u)}(\mathbf{T}^k) =  \mathbf{U}\,\Bigl[\Sigma - \tfrac{1}{\mu_{k} + 2\gamma_u}\,\mathbf{I}\Bigr]_{+}\,\mathbf{V}^{\top},\\
\end{equation}
where $\mathcal{D}_{\tau}(\mathbf{T}^k)$ is the singular-value thresholding, that is $\mathbf{T}^k=\mathbf{U}\mathbf{\Sigma} \mathbf{V}^\top$, then $[\mathbf{\Sigma}-\tau \mathbf{I}]_+$ shrinks singular values as $\mathbf{L}_u^{k+1}= \mathbf{U} [\mathbf{\Sigma}-\tau\mathbf{I}]_+\mathbf{V}^T$. $\mathbf{I}$ is the identity matrix and \((x)_+ = \max\{x,0\}\) is applied element-wise.
\end{proof}
\section{Proof of Proposition \ref{s_updtae}} \label{appB}
\begin{proof}
Equation (\ref{eq_S}) can be written as
\begin{equation}
\begin{aligned}
 \mathbf{S}_u^{k+1}
= & \arg\min_{\mathbf{S}_u}
\;\lambda\,\|\mathbf{S}_u\|_{1}
\;+\;\bigl\langle \mathbf{Y}^{k},\,\widehat{\mathbf{H}}_u - \mathbf{L}_u^{k+1} - \mathbf{S}_u\bigr\rangle \\
& \;+\; 
\frac{\mu_{k}}{2}\,\bigl\|\widehat{\mathbf{H}}_u - \mathbf{L}_u^{k+1} - \mathbf{S}_u\bigr\|_{F}^{2}.
\end{aligned}
\end{equation}
For $
\mathbf{R}_{S}^{\,k} 
:= \widehat{\mathbf{H}}_u - \mathbf{L}_u^{k+1}$, 
$
\bigl\langle \mathbf{Y}^{k},\,\mathbf{R}_{S}^{\,k} - \mathbf{S}_u\bigr\rangle
= \bigl\langle \mathbf{Y}^{k},\,\mathbf{R}_{S}^{\,k}\bigr\rangle
  - \bigl\langle \mathbf{Y}^{k},\,\mathbf{S}_u\bigr\rangle,
  $
and
$
\|\mathbf{R}_{S}^{\,k} - \mathbf{S}_u\|_{F}^{2}
= \|\mathbf{R}_{S}^{\,k}\|_{F}^{2}
  - 2\,\bigl\langle \mathbf{R}_{S}^{\,k},\,\mathbf{S}_u\bigr\rangle
  + \|\mathbf{S}_u\|_{F}^{2}.
$
\begin{equation}
\hspace{-0.2cm}
\begin{aligned}
\langle \mathbf{Y}^{k},\,\mathbf{R}_{S}^{k} - &\mathbf{S}_u\rangle
+ \frac{\mu_{k}}{2}\|\mathbf{R}_{S}^{k} - \mathbf{S}_u\|_{F}^{2}
= - \bigl\langle \mathbf{Y}^{k} + \mu_{k}\,\mathbf{R}_{S}^{\,k},\,\mathbf{S}_u\bigr\rangle
\\ 
& + \frac{\mu_{k}}{2}\,\|\mathbf{S}_u\|_{F}^{2}
+ \bigl\langle \mathbf{Y}^{k},\,\mathbf{R}_{S}^{\,k}\bigr\rangle + \frac{\mu_{k}}{2}\,\|\mathbf{R}_{S}^{\,k}\|_{F}^{2} \text{,}
\end{aligned}
\end{equation}
\begin{equation}
\begin{aligned}
\implies  \bigl\langle \mathbf{Y}^{k},\,\mathbf{R}_{S}^{\,k} - & \mathbf{S}_u\bigr\rangle 
+ \frac{\mu_{k}}{2}\,\|\mathbf{R}_{S}^{\,k} - \mathbf{S}_u\|_{F}^{2} = \\
& \frac{\mu_{k}}{2}\,\Bigl\|\mathbf{S}_u - \Bigl(\mathbf{R}_{S}^{\,k} + \tfrac{1}{\mu_{k}}\,\mathbf{Y}^{k}\Bigr)\Bigr\|_{F}^{2} \;+\; C_3.
\end{aligned}
\end{equation}
Let us define
$
\mathbf{C}^{\,k}
:= \mathbf{R}_{S}^{\,k} + \frac{1}{\mu_{k}}\,\mathbf{Y}^{k}
= \widehat{\mathbf{H}}_u - \mathbf{L}_u^{k+1} + \frac{1}{\mu_{k}}\,\mathbf{Y}^{k}.$
By dropping the constant $C_3 = \bigl\langle \mathbf{Y}^{k},\,\mathbf{R}_{S}^{\,k}\bigr\rangle + \frac{\mu_{k}}{2}\,\|\mathbf{R}_{S}^{\,k}\|_{F}^{2}$, the \(\mathbf{S}_u\)-subproblem becomes
\begin{equation}
\mathbf{S}_u^{k+1}
= \arg\min_{\mathbf{S}_u}
\;\lambda\,\|\mathbf{S}_u\|_{1}
\;+\;\frac{\mu_{k}}{2}\,\|\mathbf{S}_u - \mathbf{C}^{\,k}\|_{F}^{2}.
\end{equation}
Since \(\|\mathbf{S}_u\|_{1} = \sum_{i,j} |S_{ij}|\) and \(\|\mathbf{S}_u - \mathbf{C}^{\,k}\|_{F}^{2} = \sum_{i,j}(S_{ij} - C^{\,k}_{ij})^{2}\), the problem decouples entrywise.  For each scalar \(s\) and \(c\):
\( 
\min_{s\in\mathbb{R}}
\;\lambda\,|s| \;+\;\frac{\mu_{k}}{2}\,(s - c)^{2}.
\) 
The subgradient condition
\(\;0 \in \lambda\,\partial|s| + \mu_{k}(s - c)\)
yields the soft‐threshold operator:
\begin{equation}
s^{*} 
= \mathcal{S}_{\lambda/\mu_{k}}(c)
= \mathrm{sign}(c)\,\max\Bigl(\lvert c\rvert - \frac{\lambda}{\mu_{k}},\,0\Bigr).
\end{equation}
Hence entrywise, we have
\begin{equation}
\mathbf{S}_u^{k+1} = \mathcal{S}_{\lambda/\mu_{k}}\bigl(\mathbf{C}^{\,k}\bigr). 
\end{equation}
This completes the proof.
\end{proof}

\section{Proof of Lemma \ref{lemLS}} \label{appC}
\begin{proof}
Let $\mathbf{P}^{k} = \widehat{\mathbf{H}}_u-\mathbf{L}_u^{k}-\mathbf{S}_u^{k}$.
$(\mathbf{L}_u^{k+1},\mathbf{S}_u^{k+1})$ minimizes
$\mathcal{L}_{\mu_k}(\cdot,\cdot,\mathbf{Y}^k)$, so
\begin{equation}
\begin{aligned}
 \mathcal{L}_{\mu_k}(\mathbf{L}_u^{k+1},\mathbf{S}_u^{k+1},\mathbf{Y}^k)
&\le \mathcal{L}_{\mu_k}(\mathbf{L}_u^{k},\mathbf{S}_u^{k},\mathbf{Y}^k)\\
= &
\mathcal{L}_{\mu_k}(\mathbf{L}_u^{k},\mathbf{S}_u^{k},\mathbf{Y}^{k-1})
+ \\
&\langle \mathbf{Y}^{k}-\mathbf{Y}^{k-1},\, \mathbf{P}^{k} \rangle\\
= &
\mathcal{L}_{\mu_k}(\mathbf{L}_u^{k},\mathbf{S}_u^{k},\mathbf{Y}^{k-1})
+ \mu_k \|\mathbf{P}^{k}\|_F^2\\
= & \, \mathcal{L}_{\mu_{k-1}}(\mathbf{L}_u^{k},\mathbf{S}_u^{k},\mathbf{Y}^{k-1})
+ \\
& \frac{\mu_k-\mu_{k-1}}{2}\,\|\mathbf{P}^{k}\|_F^2 + \mu_k \|\mathbf{P}^{k}\|_F^2
\end{aligned}
\end{equation}
\begin{equation}
\begin{aligned}
 \mathcal{L}_{\mu_k}(\mathbf{L}_u^{k+1},\mathbf{S}_u^{k+1},\mathbf{Y}^k)
&\le \mathcal{L}_{\mu_{k-1}}(\mathbf{L}_u^{k},\mathbf{S}_u^{k},\mathbf{Y}^{k-1}) \\
& + \Big(\mu_k + \frac{\mu_k-\mu_{k-1}}{2}\Big)\,\|\mathbf{P}^{k}\|_F^2
\end{aligned}
\label{eqLk}
\end{equation}
Repeating this back to $i = 1$ yields 
\begin{equation}\label{eqlk2}
\begin{aligned}
\mathcal{L}_{\mu_k}(\mathbf{L}_u^{k+1},\mathbf{S}_u^{k+1},\mathbf{Y}^{k})
& \le
\mathcal{L}_{\mu_{0}}(\mathbf{L}_u^{1},\mathbf{S}_u^{1},\mathbf{Y}^{0})\\
&+\sum_{i=1}^{k}\Big(\mu_i + \frac{\mu_i-\mu_{i-1}}{2}\Big)\,\|\mathbf{P}^{i}\|_F^2\\
\end{aligned}
\end{equation}
$\sum_{i=1}^{\infty} \mu_i \|\mathbf{P}^{i}\|_F^2 < \infty$, and with \( \{\mu_i\} \) non-decreasing and bounded, also \( \sum_i \|\mathbf{P}^{i}\|_F^2 < \infty \).
Therefore $\mathcal{L}_{\mu_k}(\mathbf{L}_u^{k+1},\mathbf{S}_u^{k+1},\mathbf{Y}^{k})$ is upper bounded according to
\begin{equation}
\begin{aligned}
\langle \mathbf{Y}^k,\mathbf{P}^k\rangle+\frac{\mu_k}{2}\|\mathbf{P}^k\|_F^2
&= \frac{\mu_k}{2}\Big\|\mathbf{P}^k+\frac{1}{\mu_k}\mathbf{Y}^k\Big\|_F^2
- \frac{1}{2\mu_k}\|\mathbf{Y}^k\|_F^2\\
&\ge -\frac{1}{2\mu_k}\|\mathbf{Y}^k\|_F^2.
\end{aligned}
\end{equation}
Hence, we have
\begin{equation}
 \mathcal{L}_{\mu_k}(\mathbf{L}_u^k,\mathbf{S}_u^k,\mathbf{Y}^k)
\ge f(\mathbf{L}_u^k)+g(\mathbf{S}_u^k)-\frac{1}{2\mu_k}\|\mathbf{Y}^k\|_F^2,
\end{equation}
\begin{equation}
\implies f(\mathbf{L}_u^k)+g(\mathbf{S}_u^k)
\le \mathcal{L}_{\mu_k}(\mathbf{L}_u^k,\mathbf{S}_u^k,\mathbf{Y}^k)
+ \frac{1}{2\mu_k}\|\mathbf{Y}^k\|_F^2.
\end{equation}
$\|\mathbf{L}_u^k\|_* \le f(\mathbf{L}_u^k)$ and 
$\lambda\|\mathbf{S}_u^k\|_1 \le g(\mathbf{S}_u^k)$, then 
\begin{equation}
\begin{aligned}
\|\mathbf{L}_u^k\|_* + \lambda\|\mathbf{S}_u^k\|_1 
&\le f(\mathbf{L}_u^k)+g(\mathbf{S}_u^k) \\
&\le \mathcal{L}_{\mu_k}(\mathbf{L}_u^k,\mathbf{S}_u^k,\mathbf{Y}^k)
+ \frac{1}{2\mu_k}\|\mathbf{Y}^k\|_F^2
\end{aligned}
\end{equation}
is upper bounded. Therefore, $\{\mathbf{L}_u^k\}$ and $\{ \mathbf{S}_u^k\}$ are upper bounded.
\end{proof}

\bibliographystyle{IEEEtran}
\bibliography{references}
\end{document}